\newcommand{\removelatexerror}{\let\@latex@error\@gobble}
\newtheorem{theorem}{Theorem}
\newcolumntype{R}{>{\raggedleft\arraybackslash}X}
\newcolumntype{C}{>{\centering\arraybackslash}X}
\newtheorem{proposition}{Proposition}
\let\mycomment\comment
\let\comment\undefined
\let\originalwidth\textwidth
\let\comment\mycomment
\let\textwidth\originalwidth
\begin{document}
	
	\title{\textsc{HyColor}: An Efficient Heuristic Algorithm for Graph Coloring}
	
	\author{Enqiang Zhu$^*$, Yu Zhang$^*$, Haopeng Sun, Ziqi Wei, Witold Pedrycz, \emph{Life Fellow, IEEE},\\ Chanjuan Liu$^\dag$, \emph{Member, IEEE} and Jin Xu$^\dag$
        \thanks{$*$ Enqiang Zhu and Yu Zhang contributed equally to this paper.

        $\dag$ Chanjuan Liu and Jin Xu are the corresponding authors.
        }
  
		\thanks{This work was supported in part by the National Natural Science Foundation of China under Grants (62272115), in part by Science and Technology Projects in Guangzhou.
		
		Enqiang Zhu and Haopeng Sun are with the Institute of Computing Science and Technology, Guangzhou University, Guangzhou 510006, China (e-mail: zhuenqiang@gzhu.edu.cn, 2112006193@e.gzhu.edu.cn). 

        Yu Zhang and Jin Xu are with the Key Laboratory of High Confidence Software Technologies, Ministry of Education, also with the School of Computer Science, Peking University, Beijing 100871, China (e-mail: yuzhang.cs@stu.pku.edu.cn; jinxu.pku@gmail.com)

        Ziqi Wei is with the CAS Key Laboratory of Molecular Imaging, Institute of Automation, Beijing 100190, China. (ziqi.wei@ia.ac.cn)
			
        Witold Pedrycz is with the Department of Electrical and Computer Engineering, University of Alberta, Edmonton, AB, Canada (e-mail: wpedrycz@ualberta.ca).
    
        Chanjuan Liu is with the School of Computer Science and Technology, Dalian University of Technology, Dalian 116024, China (e-mail: chanjuan.cs.ai@gmail.com).
        }
	}

	\maketitle
\begin{abstract}
    The graph coloring problem (GCP) is a classic combinatorial optimization problem that aims to find the minimum number of colors assigned to vertices of a graph such that no two adjacent vertices receive the same color. GCP has been extensively studied by researchers from various fields, including mathematics, computer science, and biological science. Due to the $\mathcal{NP}$-hard nature, many heuristic algorithms have been proposed to solve GCP. However, existing GCP algorithms focus on either small hard graphs or large-scale sparse graphs (with up to 10$^7$ vertices). This paper presents an efficient hybrid heuristic algorithm for GCP, named \textsc{HyColor}, which excels in handling large-scale sparse graphs while {\color{black}achieving} impressive results on small {\color{black}{dense}} graphs. The efficiency of \textsc{HyColor} comes from the following three aspects: a local decision strategy to improve the lower bound on the chromatic number; a graph-reduction strategy to reduce the working graph; and a {\color{black}$k$-core} and mixed degree-based greedy heuristic for efficiently coloring graphs. 
    {\color{black} \textsc{HyColor} is evaluated against three state-of-the-art GCP algorithms across four benchmarks, comprising three large-scale sparse graph benchmarks and one small dense graph benchmark, totaling 209 instances. The results demonstrate that \textsc{HyColor} consistently outperforms existing heuristic algorithms in both solution accuracy and computational efficiency for the majority of instances. Notably, \textsc{HyColor} achieved the best solutions in 194 instances (over 93\%), with 34 of these solutions significantly surpassing those of other algorithms. Furthermore, \textsc{HyColor} successfully determined the chromatic number and achieved optimal coloring in 128 instances.
    }
	\end{abstract}
	
	\begin{IEEEkeywords}
    Graph coloring,
    heuristic,
    lower bound,
    graph reduction,
    large-scale graphs
	\end{IEEEkeywords}
	
\section{Introduction}\label{sec1}

With many important applications in the real world, combinatorial optimization problems~\cite{Hou23} have attracted great attention from both academia and industry~\cite{TianSZTJ23}. One of the most important challenges in combinatorial optimization is graph coloring, which is widely used to facilitate conflict resolution~\cite{WOS:001068977000001} or optimal demarcation of mutually exclusive events. A $k$-coloring of a graph $G$ is an assignment of $k$ colors to the vertices of $G$ such that every pair of adjacent vertices receives different colors.  The graph coloring problem (GCP) requires to determine the minimum $k$ (the chromatic number) for which $G$ admits a $k$-coloring \cite{brelaz1979new}. 

GCP plays a key role in both theory and practice. In theory, GCP is a classic $\mathcal{NP}$-hard problem. There are many famous mathematical conjectures on graph coloring, such as the four-color conjecture~\cite{four-color}, Haj\'{o}s' graph-coloring conjecture~\cite{hajos}, and Kneser's conjecture~\cite{greene2002new}. In practice, GCP can be used to model many real-world problems in fields such as operations research~\cite{gondran2019optimality}, computational biology~\cite{xue2022graph}, power systems~\cite{WOS:000631202400013}, communication networks~\cite{sadavare2012review}, machine learning~\cite{gupta2022scalable}, scheduling problem~\cite{9825671}, as well as routing and wavelength assignment~\cite{zhu2020partition}. As an example, we consider the simplest scheduling problem, which seeks to determine the minimum number of time slots to implement $n$ jobs, where two different jobs cannot use the same time slot if they share common resources (e.g., the same classroom for exams). This problem can be modeled as a graph $G$ whose vertex set represents the set of jobs, and two jobs are adjacent if they share common resources. Then, the chromatic number of $G$ is the minimum number of time slots, since vertices (jobs) with the same color can be implemented in the same time slot~\cite{garey1979}. This problem is equivalent to the {\color{black}resource-constrained} project scheduling problem as well~\cite{almeida2016priority}. Given the significance of GCP in theory and practice, it is crucial to develop efficient algorithms for solving GCP.

GCP is shown to be $\mathcal{NP}$-hard when $k\geq 3$~\cite{garey1979}, and cannot be approximated within $n^{1-\epsilon}$ unless $\mathcal{P}$=$\mathcal{NP}$~\cite{zuck2007}. Observe that determining whether a graph on $n$ vertices has a $k$-coloring can be trivially checked in time $O(k^n)$, and is too complex to complete when $n$ is very large. Many exact exponential-time algorithms for GCP have been proposed in the literature to improve this trivial upper bound. However, it is difficult to design an exact algorithm for GCP with running time $O((2-\epsilon)^n)$ for some $\epsilon>0$, but for small $k$, some better algorithms have been proposed. For $k=3$ and $4$, based on the branch-and-reduce paradigm and measure-and-conquer analysis technique, exact GCP algorithms can achieve a running time of $O(1.3289^n)$~\cite{3-coloring} and $O(1.7272^n)$~\cite{4-coloring}, respectively; counting the number of all $k$-colorings  can be solved in $O(1.588^n)$~\cite{count3} and $O(1.9464^n)$~\cite{4-coloring}, respectively.

Over the past decade, significant advances in computational methods and data science have produced several large-scale datasets, many of which can be abstracted into graphs (also called complex networks)~\cite{newman2003structure}. Although many real-world graphs are sparse and their structures may have some nice properties (e.g.,  the degree of vertices follows a power-law distribution), it is still hard to approximately solve GCP on these graphs~\cite{shen2012new}. Previous studies on GCP generally focused on small graphs with hundreds or thousands of vertices, or the computation of the optimal chromatic number~\cite{porumbel2013informed,furini2017improved,cseker2021exact}; most of these algorithms have high time and space complexity, and cannot obtain satisfactory solutions within a reasonable time on very large graphs. In contrast, a relatively small body of literature addresses GCP on large-scale graphs.

{\color{black}
In this paper, we seek to enhance the state-of-the-art performance in addressing  GCP on general graphs. We introduce a hybrid heuristic algorithm for GCP, named \textsc{HyColor}, which comprises three key phases:
(1) The first phase employs a local decision strategy to refine the lower bound of the chromatic number by identifying a clique. (2) The second phase incorporates a graph reduction strategy specifically designed for large-scale sparse graphs to minimize their size. (3) The final phase utilizes core decomposition alongside mixed degree-based greedy heuristics to efficiently color the graphs.
}
\textsc{HyColor} is evaluated across four established benchmarks, {\color{black} encompassing a total of 209 instances.} In comparison to three leading heuristic algorithms for addressing GCP, \textsc{HyColor} consistently produces a $k$-coloring with a smaller value of $k$, demonstrating its effectiveness.

The rest of this paper is organized as follows. Section~\ref{relatedwork} briefly reviews related work. Section~\ref{sec2}  provides some preliminaries. Section~\ref{sec3} describes the proposed \textsc{HyColor} algorithm. Section~\ref{sec-experiment} presents the experimental design and analyzes the results. Finally, in Section~\ref{sec-conclusion}, conclusions are offered, and future studies are suggested.

\section{Related Work} \label{relatedwork}

We review the development of heuristic GCP approaches from tabu-based algorithms,  greedy-based heuristic algorithms, and metaheuristic-based algorithms.

\begin{table*}[bthp] 
\centering
\caption{Comparison of algorithms for GCP}
\resizebox{\linewidth}{!}{
\begin{tabular}{l|l|m{7.5cm}|m{5cm}}
\hline
\textbf{Type} & \textbf{Algorithm} & \textbf{Advantages} & \textbf{Limitations} \\ \hline
\multirow{2}{*}{Metaheuristic-based} & CQEA~\cite{xu2022cuckoo} & Improves search using quantum evolutionary algorithm   & High computational resource demands limit scalability \\ \cline{2-4}
 & GA~\cite{marappan2022new} & Reduces complexity compared to traditional crossover operators & Limited to small instances; \newline multi-generation evolution requires time \\ \hline
\multirow{2}{*}{Tabu-based Heuristic} & $\textsc{Tlbo-Color}$~\cite{dokeroglu2021memetic} & Performs well on some challenging
large graphs & Requires parameter tuning; long runtime \\ \cline{2-4}
 & $\textsc{GC-SLIM}$~\cite{schidler2023sat} & Strong performance on dense graphs & High memory consumption \\ \hline
 Tabu\&Greedy-based &  $\textsc{LS+I-DSatur}$~\cite{hebrard2019hybrid} & Performs well on large sparse graphs & High dependency on parameter tuning \\ \hline
\multirow{4}{*}{Greedy-based} & $\textsc{Recolor}$~\cite{rossi2014coloring} & Fast and scalable for large networks & No reduction rules \\ \cline{2-4}
 & Verma's~\cite{verma2015solving} & Effective on sparse graphs using reduction rules & Limited performance on challenging graphs \\ \cline{2-4}
 & $\textsc{FastColor}$~\cite{lin2017reduction} & High efficiency on large sparse graphs & Reduced graph still remains quite large \\ \cline{2-4}
 & \textbf{HyColor (Proposed)} & Excels in large sparse graphs; competitive on dense graphs & Relies on vertex ordering \\ \hline
\end{tabular}
}
\label{tab:tax}
\end{table*}

\subsection{Tabu-based Heuristic Algorithms}
A typical tabu search algorithm for solving GCP, $\textsc{Tabucol}$, was proposed in 1987 by Hertz and de Werra~\cite{hertz1987using}. 
The idea of the early algorithms is to start with a $k$-coloring that contains conflict edges (i.e., edges whose endpoints receive the same color) and eliminate conflict edges through a sequence of color swaps, where the tabu is reflected in the color $c$ for a vertex $v$.  If $c$ is replaced by $c'$, then it cannot be used to color $v$ in the subsequent iterations~\cite{avanthay2003variable}.  
Bl\"{o}hliger and Zufferey (2008)~\cite{blochliger2008graph} proposed \textsc{Foo-Partialcol}, which constructs a coloring of a graph $G$ by extending partial solutions (i.e., a coloring of a subgraph of $G$ without conflict edges), and introduced a reactive tabu tenure to enhance the performance of the proposed heuristic and the $\textsc{Tabucol}$ heuristic. However, the reactive forbidden search requires adjustment of the forbidden list, which is time-consuming. Therefore, \textsc{Foo-Partialcol} cannot efficiently deal with large graphs.  
Wu and Hao (2012)~\cite{wu2012coloring} proposed the \textsc{Extracol} algorithm to color large graphs, using an adaptive tabu search strategy to find many pairwise disjoint independent sets, and a memetic algorithm~\cite{lu2010memetic} to color the residual graph by deleting the found independent sets. Although  \textsc{Extracol} achieved better experimental results than previous algorithms on 11 large graphs (with 1000 to 4000 vertices), it required five hours to find a better solution, which is inefficient and unacceptable for industrial applications. 
In 2018, Marappan et al.~\cite{marappan2018solution} and Moalic et al.~\cite{moalic2018variations} introduced a hybrid heuristic, combining a genetic algorithm and a tabu search to solve GCP, using the genetic algorithm to find better solutions and the tabu search to accelerate convergence. However, the algorithm requires many computations to implement the genetic process, and the performance depends on the parameter settings, which affects its application to large graphs. 
Dokeroglu et al. (2021)~\cite{dokeroglu2021memetic} proposed \textsc{Tlbo-color}, a robust version of the modulo teaching optimization algorithm for solving GCP, incorporating a robust forbidden search and parallel metaheuristics. \textsc{Tlbo-Color} was shown to perform well on some challenging large graphs and ran faster than previous algorithms. However, \textsc{Tlbo-Color} often requires nearly 30 minutes to obtain a better solution, and many parameters must be tuned. The above hybrid heuristic algorithms are not guaranteed to capture a promising region when restarting the local search from a new point created by the crossover operator. To overcome this issue, Goudet et al (2022).~\cite{goudet2022deep} used a deep learning neural network to extract feature information of the structures of graphs, based on which a tabu search strategy was proposed to search for a better solution. However, the algorithm converges slowly, and its performance is limited by hardware since significant computing resources are needed to train the deep learning model. 
{\color{black}Recently, Schidler and Szeider \cite{schidler2023sat} proposed a hybrid approach known as \textsc{GC-SLIM} for addressing GCP. This method integrates tabu search with the SAT-based local improvement method (SLIM)~\cite{lodha2019sat,reichl2023circuit}. \textsc{GC-SLIM} iteratively refines the current coloring by selecting small subgraphs and optimally coloring them using SAT solvers, such as Glucose~\cite{audemard2009predicting} and Cadical~\cite{fleury2020cadical}. The method demonstrates strong performance on dense graphs, consistently producing high-quality coloring solutions. However, it faces considerable challenges when managing large-scale graphs, primarily due to the necessity of constructing an adjacency matrix for the graph, particularly in environments with limited memory.
}

\subsection{Greedy-based Heuristic Algorithms}
{\color{black}
In practice, greedy-based heuristic algorithms are commonly employed {\color{black}to solve} challenging problems~\cite{zhao2024iterative,7938718}.}
In the common framework of greedy heuristics for GCP, the vertices of a graph $G$ are arranged in linear order, say $x_1,x_2,\ldots,x_n$, and are colored, one by one, in this order with the smallest available color, i.e., each $x_i$ is colored with the smallest color $c$ that has not yet been assigned to a vertex in $N_{G}(x_i)\cap \{x_1,x_2,\ldots, x_{i-1}\}$. The quality of a coloring obtained by a greedy coloring heuristic is generally poor, and this depends heavily on the predefined vertex order. Note that there are in total $n!$ distinct vertex orders for an $n$-vertex graph, and it is hard to find an ordering based on which optimal coloring can be produced. Therefore, the greedy coloring heuristic is often combined with other techniques to improve the quality of solutions.

Rossi et al. (2014)~\cite{rossi2014coloring} proposed a local search algorithm, \textsc{Recolor}, to improve the quality of solutions obtained by greedy colorings based on different vertex-ordering methods. The algorithm does not perform well on hard graphs and applies no reduction rules, affecting its performance on large graphs. Verma et al. (2015)~\cite{verma2015solving} introduced reduction rules for GCP based on $k$-core and $k$-community and used various greedy heuristics to find lower and upper bounds on the chromatic number. Although this algorithm performed well on simple sparse graphs, it struggled with hard {\color{black}graphs}. Lin et al. (2017)~\cite{lin2017reduction} proposed a reduction rule based on a lower bound on the chromatic number and developed the \textsc{FastColor} algorithm with an iterative greedy coloring strategy to color large-scale sparse graphs. 
{\color{black}
Hebrard et al. (2019)~\cite{hebrard2019hybrid} introduced a hybrid coloring algorithm named \textsc{LS+I-DSatur}. This algorithm computes upper and lower bounds by exploiting graph degeneracy and cliques, respectively. It begins by simplifying the graph through various reduction techniques. Subsequently, it enhances the upper bound using \textsc{DSatur} in conjunction with Tabu search methods, ultimately performing the exact coloring process with iterated \textsc{DSatur} (\textsc{I-DSatur}). Since this algorithm uses tabu search, it can also be classified as a Tabu-based heuristic.}

\subsection{Metaheuristic-based Algorithm}
{\color{black}
Metaheuristic-based algorithms are commonly utilized to tackle various challenging problems, such as scheduling~\cite{pan2020knowledge, zhao2022hyperheuristic}, dominating sets~\cite{zhu2024dual}, and also GCP~\cite{7835722}.}  
In 2020, Mostafaie et al.~\cite{mostafaie2020systematic} systematically reviewed metaheuristic approaches for GCP. 
In the following, we present some recent advancements related to this topic.

Meraihi et al. (2018)~\cite{meraihi2019chaotic} proposed a salp swarm algorithm (SSA) for GCP,  using a logistic map instead of random variables.
While the SSA performs well compared with previous metaheuristic algorithms for GCP, it is still limited to small hard instances with at most 200 vertices. Silva et al. (2020)~\cite{silva2020improved} proposed an ant colony optimization (ACO) algorithm for GCP, with some heuristic strategies to improve the quality of solutions. However, the performance of the ACO algorithm {\color{black}depends} significantly on its parameter settings and the quality of the initial solution.
The particle swarm optimization method (PSO) is often used to solve some operations research problems~\cite{wang2012hybrid}.
Marappan et al. (2021)~\cite{marappan2021solving} devised a partition-based turbulent particle swarm optimization (DCTPSO) model to overcome stagnation when solving GCP. However, the algorithm requires the tuning of multiple parameters, and partitioning a graph may take a large amount of time. 
Xu et al. (2022) \cite{xu2022cuckoo} combined a quantum-inspired evolutionary framework with a cuckoo search strategy in the proposed CQEA algorithm for solving GCP. Although the quantum evolutionary algorithm can improve the search performance, the quantum gate operation requires many computational resources, which limits its generalization to large-scale graphs.  
Marappan et al.~\cite{marappan2022new} proposed a genetic algorithm (GA) to solve GCP, with a new stochastic operator to accelerate convergence.
While the algorithm reduces complexity compared to traditional crossover operators, the running target of the GA is still restricted to small hard instances, since the multi-generational evolution of large-scale graphs may require longer computation times. 
ACO and GA are the most popular metaheuristic algorithms for solving GCP. However, compared to hybrid heuristics based on reduction and greedy strategies, metaheuristics have relatively weak efficiency and accuracy, especially for large-scale networks. 

{\color{black}
Among the numerous GCP algorithms available, we compare a selection of advanced algorithms to elucidate their respective advantages and limitations, as outlined in Table~\ref{tab:tax}. Currently, \textsc{FastColor} and \textsc{LS+I-DSatur} are recognized as the leading algorithms for coloring large-scale sparse graphs, while \textsc{GC-SLIM} excels with small dense graphs and those of medium size. Each of these algorithms offers unique strengths in graph coloring; however, achieving a balance in coloring quality between small dense graphs and large-scale sparse graphs continues to pose a challenge. To tackle this issue, this paper introduces \textsc{HyColor}, designed to attain effective coloring quality for both large sparse graphs and small dense graphs through thoughtfully well-designed strategies.
}

\section{Preliminary Work} \label{sec2}

All graphs considered in this paper are simple finite undirected graphs.  Given a graph $G$, an edge {\color{black}$e=uv$ is said to be \emph{incident} with its two endpoints $u$ and $v$}, and \emph{vice versa}. If two distinct vertices are incident with the same edge, then they are \emph{adjacent}, and one is said to be a \emph{neighbor} of the other. The \emph{degree} $d_G(u)$ of a vertex $u\in V(G)$ is the number of edges incident with $u$; the \emph{neighborhood} of $u$, denoted by $N_G(u)$, is the set of vertices adjacent to $u$, and  $N_G[u]=N_G(u)\cup \{u\}$. Clearly, $d_G(u)= |N_G(u)|$. For any $V'\subseteq V(G)$,  the graph obtained from $G$ by deleting the vertices in $V'$ and their incident edges is denoted by $G-V'$, and  $G[V']=G-(V(G)\setminus V')$ is called the subgraph of $G$ \emph{induced} by $V'$.

A \emph{proper coloring} of a graph $G$ is an assignment of colors to $V(G)$ such that any two adjacent vertices receive different colors. Given a proper coloring $f$ of $G$, we use $CS(f)$ to denote the set of colors used by $f$, and denote by $f|_H$ the restriction of $f$ to a subgraph $H$ of $G$. If $|CS(f)|=k$, then we call $f$ a $k$-coloring of $G$. The minimum $k$ such that $G$ admits a $k$-coloring is called the \emph{chromatic number} of $G$, denoted by $\chi(G)$. In this paper, we do not consider other types of colorings{\color{black}; when we refer to a coloring, we exclusively mean} a proper coloring. A $k$-coloring is called an \emph{optimal coloring} of $G$ if $k=\chi(G)$.

Two important notions that are closely related to coloring are independent sets and cliques. For a graph $G$, an \emph{independent set} (resp. a clique) of $G$ is a subset $S \subseteq V(G)$ such that $G[S]$ is an empty graph (resp. a complete graph).  A $k$-independent partition of $G$ is a set $\{V_1,V_2,\ldots, V_k\}$ of cardinality $k$ such that $V_i$ is an independent set, $V_i\cap V_j=\emptyset$, and $V(G)=V_1\cup V_2 \cup \ldots \cup V_k$, where $i,j \in \{1,2,\ldots,k\}$ and $i\neq j$.
Denote by $\alpha(G)$ the smallest integer $k$ for which $G$ admits a $k$-independent partition. Note that every $k$-independent partition of $V(G)$ corresponds to a $k$-coloring $f$ of $G$ such that $f(v)=i$ for every $v\in V_i, i=1,2,\ldots,k$. Therefore, $\chi(G)=\alpha(G)$.
A clique $S$ of a graph $G$ is \emph{maximum} if no clique $S'$ of $G$ satisfies $|S'|>|S|$. Denote by $\omega(G)$ the number of vertices of a maximum clique of $G$. Clearly, $\chi(G)\geq \omega(G)$.
Let $H$ be a subgraph of $G$. Then, $\chi(G)\geq \chi(H)$, since $f|_H$ is a coloring of $H$ for any coloring $f$ of $G$.

From the above discussion, we have the Proposition~\ref{pro-1}.
\begin{proposition}\label{pro-1}
	Let $G$ be a graph with a $k$-independent partition, $H$ a subgraph of $G$, and $C_H$  a clique of $H$. Then
	$k\geq {\color{black}\alpha(G) = \chi(G)}\geq \chi(H)\geq \omega(H)\geq |C_H|.$	
\end{proposition}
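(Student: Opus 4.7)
The plan is to establish the chain of inequalities one link at a time, leaning almost entirely on the definitions and observations collected just before the proposition statement. The whole argument reads as a sequence of one-line justifications; no new machinery is required, so the main task is to make sure each inequality is tied cleanly to the appropriate notion introduced in Section~\ref{sec2}.

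First I would dispatch the leftmost inequality $k\geq \alpha(G)$: by hypothesis $G$ admits a $k$-independent partition, and $\alpha(G)$ is defined as the \emph{smallest} integer for which such a partition exists, hence $\alpha(G)\leq k$. Next, for the equality $\alpha(G)=\chi(G)$ I would spell out the bijection already sketched in the preamble: given a $k$-independent partition $\{V_1,\ldots,V_k\}$, the map $f(v)=i$ whenever $v\in V_i$ is a proper $k$-coloring (endpoints of an edge lie in different parts since each $V_i$ is independent); conversely, given any proper coloring $f$, the color classes $f^{-1}(c)$ form an independent partition of $V(G)$ of size $|CS(f)|$. This bijection makes the minima coincide, so $\alpha(G)=\chi(G)$.

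For $\chi(G)\geq \chi(H)$ I would invoke the observation that if $f$ is an optimal coloring of $G$, then its restriction $f|_H$ is a proper coloring of $H$ using no more colors than $f$ itself, so $\chi(H)\leq |CS(f|_H)|\leq |CS(f)|=\chi(G)$. For $\chi(H)\geq \omega(H)$, let $S$ be a maximum clique of $H$; since the vertices of $S$ are pairwise adjacent, any proper coloring of $H$ must assign them pairwise distinct colors, hence $\chi(H)\geq |S|=\omega(H)$. Finally $\omega(H)\geq |C_H|$ is immediate from the definition of $\omega(H)$ as the maximum cardinality of a clique in $H$, while $C_H$ is just \emph{some} clique of $H$.

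Chaining these five steps together yields the stated sequence of inequalities. There is no real obstacle here; the only thing to watch is that the equality $\alpha(G)=\chi(G)$ is argued in both directions rather than presented as a definition, since $\alpha$ and $\chi$ were introduced through distinct extremal problems. Everything else is essentially a one-line appeal to a definition already in place.
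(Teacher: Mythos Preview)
Your proposal is correct and mirrors the paper's own treatment: the paper does not give a separate proof of Proposition~\ref{pro-1} but simply states that it follows ``from the above discussion,'' i.e., from precisely the definitions and observations (the correspondence between $k$-independent partitions and $k$-colorings, the restriction $f|_H$, and the clique bound) that you invoke link by link. Your write-up is just a slightly more explicit unpacking of that same discussion.
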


Proposition \ref{pro-1} provides a method of calculating a lower and upper bound for $\chi(G)$. Although $\alpha(G)$ and $\omega(H)$ can give better upper and lower bounds, determining them exactly in a graph is $\mathcal{NP}$-hard \cite{garey1979}. Therefore, in practice, one can resort to heuristic approaches to find a clique with cardinality as large as possible instead of $\omega(G)$ and a $k$-independent partition with $k$ as small as possible instead of $\alpha(G)$.

\section{H{\scriptsize Y}C{\scriptsize OLOR} Algorithm}\label{sec3}

 We introduce the \textsc{HyColor} algorithm, which receives a graph as input and outputs a coloring and the number of colors the coloring uses. 
 We define the following notations. 
 {\color{black}Let \( G \) and \( G_k \) represent the original graph and the working graph, respectively. The symbols \( lb^* \) and \( ub^* \) denote the best-known lower and upper bounds on the chromatic number of \( G \). The variable \( last\_lb_e \) reflects the lower bound on the chromatic number of \( G \) from the previous iteration, while \( f^* \) represents the current best-found coloring. The variable \( del\_stack \) is a stack that maintains a chronological order of removed vertices, enabling \( f^* \) to extend into a coloring of \( G \). Furthermore, the Boolean variable \( exactlb \) indicates whether the \textsc{Exactlb}() function (refer to Section~\ref{sec-exactlb}) is being employed for clique computation, while the Boolean variable \( reduced \) specifies whether \( G_k \) has undergone reduction.}

As shown in Algorithm \ref{alo-1}, \textsc{HyColor} first initializes the working graph $G_k$ as $G$, $lb^*$ {\color{black}and $last\_lb_e$} as 0, $ub^*$ as the order $|V|$ of $G$, $f^*$ as an empty function, $del\_stack$ as an empty stack, {\color{black}$exactlb$ as \emph{true},} and $reduced$ as \emph{false}. 

The algorithm then enters into a loop in which a coloring of $G_k$ is iteratively improved until an optimal coloring is proved, i.e., $lb^*=ub^*$ {\color{black}(lines 26 and 27), or until all vertices are removed according to graph reduction (lines 18 and 19), or a time limit $cut\_off$ is reached (lines 3--27).} Finally, \textsc{HyColor} returns the best found coloring $f^*$ of $G$. 
{\color{black}Each iteration of \textsc{HyColor} is primarily composed of three main components.}

\begin{algorithm}[htbp]
	\caption{$\textsc{HyColor}(G)$} \label{alo-1}
	\KwIn{An undirected graph $G = (V,E)$}
	\KwOut{A coloring $f^*$ of $G$ and $|CS(f^*)|$}
	${G_k}\leftarrow G$, $lb^*\leftarrow 0,u{b^*}\leftarrow |V|$, ${f^*}\leftarrow \emptyset, last\_lb_e\leftarrow 0$\;
    $del\_stack\leftarrow \emptyset$, $exactlb\leftarrow true$, $reduced \leftarrow false$\;
    
	\While{$elapsed\_time < cut\_off$}{
		$last\_lb \leftarrow lb^*$\;
		$lb^* \leftarrow$ max($lb^*$, $\textsc{Findclq}(G_k)$)\;

        \If{$lb^* > last\_lb$}{
            $(G_k, S, reduced)\leftarrow \textsc{ReduRule} (G_k, lb^*)$\;
            $del\_stack \leftarrow del\_stack\cup S$\;
        }
		
		\If{$exactlb$ is true}{
  
			$lb_{e} \leftarrow$ $\textsc{Exactlb}(G_k, lb^*)$\;
            \If{$lb_{e} > lb^*$}{
                $lb^*\leftarrow lb_{e}$\;
                $(G_k, S, reduced)\leftarrow \textsc{ReduRule} (G_k, lb^*)$\;
                $del\_stack \leftarrow del\_stack\cup S$\;
            }
            \If{$lb_{e} = last\_lb_e$}{
                $exactlb \leftarrow false$\;
            }
            $last_{e}\leftarrow lb_{e}$\;
		}

        \If {$V(G_k)$ = $\emptyset$ }{
            \textbf{break};
        }

		\If {$reduced $ \emph{or} $ub^*=|V|$}{
			
			$f \leftarrow $ \textsc{MddColor}($G_k$, $f^*|_{G_k}, mdd\_set$); // $f^*|_{G_k}$ is the restriction of $f^*$ to $G_k$
		}
		\Else{
			$f \leftarrow $ \textsc{Dsatur}($G_k$);
		}

		\If {$|CS(f)| < u{b^*}$}{
			${f ^*} \leftarrow f,u{b^*} \leftarrow |CS(f)|$;
		}
		
		\If {$u{b^*} = l{b^*}$}{
			
			\textbf{break};
			
		}

	}
	
	$f^* \leftarrow $ a coloring of $G$ extended from $f^*$ by coloring vertices in $del\_stack$ according to\textbf{BF-Rule};
	
	\Return{$f^*$ \emph{and} $|CS(f^*)|$}
	
\end{algorithm}

The first {\color{black}component} serves for the computation of a high-quality lower bound on $\chi(G)$ {\color{black}(lines 4, 5, 10--12)} {\color{black}for assisting graph reduction}, using an efficient heuristic finding clique algorithm \textsc{Findclq}$()$ to find {\color{black}a clique $C$} of $G_k$ (simplified from FindClique$()$~\cite{cai2016fast}, proposed by Cai and Lin, 2016). Note that $\chi(G)\geq \chi(G_k)\geq |C|$. 
{\color{black}The algorithm employs the clique \( C \) that is found at the start of the process, specifically the one with the largest size \( |C| \), as the lower bound \( lb^* \). If a superior lower bound is identified in any iteration, the algorithm swiftly utilizes this new bound to simplify the instance, as outlined in lines 6–8. 
} 
The value of $lb^*$ is refined through the application of a novel heuristic approach, designated as \textsc{Exactlb}() (refer to Section~\ref{sec-exactlb}), {\color{black}under the condition that the Boolean variable $exactlb$ is set to true (lines 9--17).}


The second {\color{black}component} is a graph {\color{black}reduction} strategy, \textsc{ReduRule()} (see Section~\ref{sec-redu-1}), which is used to iteratively reduce the working graph {\color{black}(lines 6--8, 11--14). It effectively reduces the size of the instance and the problem scale, thereby increasing the algorithm's likelihood of approaching an optimal solution.} Each time a larger lower bound $lb^{\color{black}*}$ on $\chi(G_k)$ is obtained, the algorithm searches for vertices in $G_k$ with degree less than $lb^{\color{black}*}$. 
If such vertices exist, the working graph $G_k$ is reduced using the $\textsc{ReduRule}()$ procedure (see Section~\ref{sec-redu-1}), and the stack $del\_stack$ is updated by adding the new removed vertices one by one and {\color{black}$reduced$} is set to $true$. It should be noted that when the lower bound $lb^{\color{black}*}$ is not improved or there are no vertices with degree less than $lb^{\color{black}*}$, no vertex can be deleted by $\textsc{ReduRule}()$, and $G_k$ remains unchanged.

The third {\color{black}component is a coloring scheme, \textsc{MddColor()}, based on the concept of $k$-core and} mixed degrees, as explained in Section~\ref{sec-mddcolor}. 
{\color{black}The sorting based on core decomposition and mixed degree is relatively time-consuming, and the vertex sequence changes only when the working graph is reduced.}
Based on this observation, \textsc{MddColor()} is implemented only in the case that {\color{black}$reduced$ is $ture$ or the \textsc{HyColor} algorithm is just in its first round of iteration (lines 20 and 21)}. Moreover, if the reduction fails in the $i${\color{black}-}th ($i\geq 2$) iteration, then a classical coloring algorithm $\textsc{Dsatur()}$~\cite{brelaz1979new} is employed to color the working graph $G_k$ ({\color{black}lines 22 and 23}). 
If the above coloring procedures can obtain a better coloring $f$ than the current {\color{black}best-found} coloring $f^*$, then $f^*$ is replaced by $f$ ({\color{black}lines 24 and 25}). Note that $lb^*=ub^*$ implies that an optimal coloring has been obtained and the loop is terminated ({\color{black}lines 26 and 27}). Finally, the obtained coloring $f^*$ of $G_k$ is extended to a coloring of the original graph $G$ by coloring vertices in the stack $del\_stack$ that stores the removed vertices according to \textbf{BF-Rule} ({\color{black}line 28}), as explained in Proposition~\ref{pro-2}.

It should be noted that each time the algorithm enters a new round of iterations, it attempts to improve the current lower bound on the chromatic number, and then the working graph $G_k$ may be reduced based on the new lower bound. In addition, when $G_k$ is reduced, the upper bound of the chromatic number of $G_k$ can be improved with a certain probability by using \textsc{MddColor} or \textsc{Dsatur}. Therefore, in each iteration, the algorithm sequentially performs graph reduction and the search for a better upper bound.

We next describe the three important parts used in each iteration of $\textsc{HyColor}$, i.e., $\textsc{Exactlb()}$ for calculating the lower bound on the chromatic number, $\textsc{ReduRule()}$ for reducing the working graph, and $\textsc{MddColor()}$ for coloring the reduced graph.

\subsection{E{\scriptsize XACTLB} Algorithm} \label{sec-exactlb}

The \textsc{Exactlb()} algorithm is used to improve a lower bound on the chromatic number, i.e., the cardinality $lb^{\color{black}*}$ of a clique  obtained by {\color{black}the} heuristic algorithm $\textsc{Findclq}(G_k)$.
It should be noted that sometimes $lb^{\color{black}*}$ is far less than $\chi(G_k)$. This inspires us to design a heuristic approach to improve $lb^{\color{black}*}$.

The \textsc{Exactlb()} algorithm is based on Proposition \ref{pro-1}, which iteratively constructs small and dense subgraphs, and employs an {\color{black}advanced industrial solver to find larger cliques as better lower bounds on the chromatic number of $G_k$.
As shown in  Algorithm \ref{exactlb}, \textsc{Exactlb()} accepts a graph $G_k$ and a lower bound $lb_k$} on $\chi(G_k)$ as inputs, and outputs an improved lower bound.

First, a vertex $u$ with the maximum degree is selected and the test set $test\_set$ is initialized as $N_{G_k}[u]$, where the subgraph of $G_k$ induced by $test\_set$ is used to estimate $\chi(G_k)$ (line 1). After initialization, the algorithm enters a loop in which the size of $test\_set$ gradually increases.
{\color{black}
In each iteration, the algorithm starts with expanding the $test\_set$ by adding a vertex $v$ along with its neighbors, where  $v$ is selected as a neighbor of a vertex already in the $test\_set$ that has the maximum number of neighbors not currently included in the $test\_set$ (lines 3 and 4). Let $G_t$ denote the subgraph of $G_k$ induced by the $test\_set$. The advanced industrial solver Gurobi is then employed to compute a high-quality clique, which helps enhance the lower bound on the chromatic number $\chi(G_k)$.
It is crucial to note that Gurobi can only be activated when the size of the $test\_set$ is below a predefined threshold, referred to as $size\_upper$. This limitation is implemented to improve the efficiency of the iterations of the \textsc{HyColor} algorithm, given that Gurobi tends to be relatively time-consuming.
If the size of the resulting clique, denoted as $lb_{grb}$, surpasses the current lower bound $lb_k$, then the function \textsc{Exactlb()} will return $lb_{grb}$. If the time limit $cut\_off$ is reached or if the size of the $test\_set$ exceeds $size\_upper$, then $lb_k$ will be returned as the result.
Considering that common challenging small hard graphs have fewer than 1000 vertices and given Gurobi's solving capabilities, the value of $size\_upper$ is set to 1000. Additionally, since Gurobi iterates to find the optimal solution, which can be time-consuming, we aim for it to quickly return a feasible clique. Therefore, the $cut\_off$ for Algorithm~\ref{exactlb} is set to 1 second.
}

\begin{algorithm}[!t]
	\caption{\textsc{Exactlb}($G_k,lb_k$)}\label{exactlb}

	\KwIn{An undirected graph $G_k = (V_k,E_k)$, $l{b_k}$}
	\KwOut{An updated $l{b_k}$}
	
	$test\_Set \leftarrow N_{G_k}(u)$ // $u$ is an arbitrary vertex with the maximum degree in  $G_k$;
	
	\While{$elapsed\_time <  cut\_off$}{	
		
		$v \leftarrow $  pop a vertex from $V_k- test\_Set$ such that $N_{G_k}(v)\cap test\_Set \neq \emptyset$ and $N_{G_k}(v)-test\_Set$ is the maximum;
		
		$test\_Set \leftarrow text\_Set \cup  N_{G_k}[v]$;
		
		${G_t} \leftarrow G_k[text\_Set]$
  
			
		\If{$|test\_Set| < size\_upper$}{
		  $lb_{grb} \leftarrow$ $\textsc{Gurobi}(G_t)$\;
            \If{$lb_{grb} > lb_k$}{
			 \Return{$lb_{grb}$};
		  }
        }\Else{
            \Return{$l{b_k}$};
        }

	}
	
	\Return{$l{b_k}$};
	
\end{algorithm}


\subsection{Graph Reduction} \label{sec-redu-1}

The graph reduction \textsc{ReduRule}() is based on a lower bound on the chromatic number. As shown in Algorithm \ref{reduced-rule}, for a graph $G_k$ and a given lower bound $\ell$ on $\chi(G_k)$, the process of reducing $G_k$ is to iteratively delete a vertex with degree less than $\ell$; such a vertex is called an \emph{$\ell^-$-vertex}. Specifically,
if there exists an $\ell^-$-vertex in the working $G_k$, then $G_k$ is updated by deleting all $\ell^-$-vertices and adding them (one by one) to an ordered set $S$; we repeat this process until $G_k$ contains no $\ell^-$-vertex (lines 2--5). Compared with the reduction rule  \textbf{BIS-Rule} proposed by Lin et al. \cite{lin2017reduction}, the set of deleted vertices by \textsc{ReduRule}() is not necessarily an independent set, and \textsc{ReduRule}() emphasizes the dynamic degree when deleting vertices in one round of iteration in $\textsc{HyColor}()$, i.e., a deleted vertex $x$ cannot be an $\ell^-$-vertex of $G_k$ but it is an $\ell^-$-vertex of $G_k-S'$, where $S'$ is the set of vertices deleted before $x$. This can provide a better chance to delete more vertices, at least in one round of iteration, and accelerate the speed of reaching  good lower and upper bounds on the chromatic number.

\begin{algorithm}[!h]
	\caption{\textsc{ReduRule}($G_k, \ell$)}\label{reduced-rule}

	\KwIn{An undirected graph $G_k=(V,E)$, an integer $\ell$}
	\KwOut{An reduced graph and an ordered set $S$ of vertices}
	
	$S \leftarrow \emptyset$;
	
	\While{there exists a $\ell^-$-vertex}{
		
		$S'\leftarrow$ the set of $\ell^-$-vertices in  $G_k$;
		
		$S \leftarrow S \cup S'$;  
		
		$G_k \leftarrow G_k-S'$;
		
	}
	\Return{$G_k$ \emph{and} $S$};
	
\end{algorithm}

For a graph $G$ and a lower bound $\ell$ on $\chi(G)$, let $(G',S)=$\textsc{ReduRule}($G,\ell$) be the reduced graph and the set of deleted vertices returned by \textsc{ReduRule}($G,\ell$). The following proposition ensures that any coloring of $G'$ can be extended to a coloring of $G$ using at most $\chi(G)$ colors.

\begin{proposition}\label{pro-2}
	Given a graph $G$ and a lower bound $\ell$ on $\chi(G)$, let $(G', S)=\textsc{ReduRule}(G,\ell)$, and $f'$ a $k$-coloring of $G'$.  Then $\chi(G)\leq \max\{\ell, k\}$.
\end{proposition}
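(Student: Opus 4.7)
The plan is to exhibit an explicit proper coloring of $G$ that uses at most $\max\{\ell,k\}$ colors, which immediately yields $\chi(G)\leq\max\{\ell,k\}$. The construction simply extends the given $k$-coloring $f'$ of $G'$ by re-introducing the deleted vertices in the reverse order of their removal---this is precisely the \textbf{BF-Rule} alluded to in Algorithm~\ref{alo-1}.

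First I would make the structure of $S$ explicit. Each iteration of the while-loop in Algorithm~\ref{reduced-rule} simultaneously removes a whole batch of $\ell^-$-vertices, so $S$ decomposes as $B_1,B_2,\ldots,B_t$ in deletion order, where every $v\in B_i$ is an $\ell^-$-vertex of the intermediate graph $H_i:=G-(B_1\cup\cdots\cup B_{i-1})$. The crucial inequality is then $|N_{H_i}(v)|<\ell$, which says that at the moment $v$ was deleted, all but fewer than $\ell$ of its $G$-neighbors had already been thrown away in earlier batches. Note that $H_1=G$ and $H_{t+1}=G'$.

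Next I would build $f:V(G)\to\{1,\ldots,\max\{\ell,k\}\}$ greedily, regarding the $k$ colors of $f'$ as the integers $1,\ldots,k$. Set $f|_{V(G')}:=f'$, and then process the batches in reverse order $B_t,B_{t-1},\ldots,B_1$, within each batch in any fixed order, at each step giving $v$ the smallest color in $\{1,\ldots,\max\{\ell,k\}\}$ not used by an already-colored neighbor of $v$. The key observation is that when $v\in B_i$ is about to be colored, every already-colored vertex lies in $V(G')\cup B_t\cup\cdots\cup B_i\subseteq V(H_i)$, so $v$'s colored neighbors form a subset of $N_{H_i}(v)$, which has size at most $\ell-1$. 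A palette of $\max\{\ell,k\}\geq\ell$ colors therefore contains at least one admissible color by a pigeonhole count, and the greedy step succeeds.

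The main subtlety to guard against is the batch semantics: because an entire set $B_i$ of $\ell^-$-vertices is deleted in a single loop iteration, two members of $B_i$ may well be adjacent in $G$, and during the reverse-order pass some of them get colored before $v$. What rescues the argument is that these earlier-colored batch-mates still lie in $V(H_i)\setminus\{v\}$ and hence are already accounted for in the bound $|N_{H_i}(v)|<\ell$, independently of the intra-batch ordering. Once this inclusion is verified, the greedy extension is a proper coloring of $G$ with at most $\max\{\ell,k\}$ colors, which is exactly the claim.
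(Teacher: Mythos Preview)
Your proof is correct and follows essentially the same approach as the paper: extend $f'$ by greedily coloring the removed vertices in reverse deletion order, using that each such vertex has fewer than $\ell$ already-colored neighbors at the moment it is processed. The only cosmetic difference is bookkeeping---the paper linearizes $S$ as $v_1,\ldots,v_p$ and works with the graphs $G_i=G-\{v_1,\ldots,v_{i-1}\}$, whereas you keep the batch structure $B_1,\ldots,B_t$; your explicit discussion of why intra-batch adjacencies cause no trouble is a point the paper leaves implicit (it is absorbed into the observation that removing more vertices can only lower degrees).
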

\textbf{Proof.}
Suppose that $S=\{v_1,v_2, \ldots, v_p\}$, where $v_i$ is deleted before $v_j$ for any $i<j, i,j\in\{1,2,\ldots, p\}$. Let $G_1=G$ and $G_i=G-\{v_1,\ldots, v_{i-1}\}$  for $i=2,3,\ldots, p$. Then $G'=G_p-\{v_p\}$, and $v_i$ is an $\ell^-$-vertex in $G_i$ for $i=1,2,\ldots,p$. Now, construct a coloring of $G_i$, denoted by $f_i$, by extending a coloring $f_{i+1}$ of $G_{i+1}$, for $i=1,2,\ldots, p$ in the following way, where $G_{p+1}=G'$.

First, let $f_{p+1}=f'$ and $C_p=\{f'(v)| v\in N_{G_p}(v_p)\}$. Note that $v_p$ is an $\ell^-$-vertex of $G_p$, which implies that $v_p$ has at most $\ell-1$ neighbors in $G_p$;
it follows that $|C_p|\leq \ell-1$, and thus $f_p$ can be constructed based on $f'$ by coloring $v_p$ with a color not appearing on neighbors of $v_p$ in $G_p$, i.e., let $f(v_p)$=$c_p \in \{1,2,\ldots, \ell\} \setminus C_p$ and $f_p(v)=f'(v)$ for any $v\in V(G_p)\setminus \{v_p\}$. Next, suppose that $f_i$ has been constructed for some $i\in \{2,\ldots, p\}$. Then $f_{i-1}$ is constructed in the same way, i.e., on the basis of $f_i$, we color vertex $v_{i-1}$ with an arbitrary color in $\{1,2,\ldots, \ell\}\setminus C_{i-1}$, where $C_{i-1}=\{f_{i}(v)| v\in N_{G_{i-1}}(v_i)\}$. Finally, $f_1$ is a coloring of $G$. Since, in the process of constructing $f_1$, vertices in $S$ use at most $\ell$ colors, it follows that  $\chi(G)\leq |CS(f_1)|\leq  \max\{\ell, k\}$. 
\qed

Proposition \ref{pro-2} guarantees that a coloring $f$ of a graph $G$ generated by extending a $k$-coloring $f'$ of a reduced subgraph $G'$ (obtained by \textsc{ReduRule()}) is either an optimal coloring of $G$ (when $k\leq \ell$ or $k=\chi(G)$) or a coloring using exactly $k$ colors (when $k>\chi(G)$). The rule of extending $f'$ to $f$ is denoted by \textbf{BF-Rule}.

To illustrate \textsc{ReduRule()}  and  \textbf{BF-Rule}, an example is presented in Figure \ref{fig-1}, which is a graph $G$ with vertex set  $V(G)=\{v_i|i=1,2,\ldots,13\}$. It is easy to see that the clique number of $G$ is 4, where $\{v_4,v_5,v_{12},v_{13}\}$ is a clique. Let $\ell=4$ be a lower bound on $\chi(G)$. In $G$, there are three $4^-$-vertices $v_1,v_3,v_6$; we delete $v_1,v_3,v_6$ from $G$ and let $G_1=G-\{v_1,v_3,v_6\}$. In $G_1$, there are two $4^-$-vertices $v_4,v_5$; we  delete $v_4,v_5$ from $G$ and let $G_2=G_1-\{v_4,v_5\}$. In $G_2$, there are no $4^-$-vertices; see the subgraph labeled with dashed lines in Figure \ref{fig-1}.  Now, let $f'$ be a 5-coloring of $G_2$, which is defined as  $f'(v_8)=f'(v_{12})=1,f'(v_9)=f'(v_2)=2,f'(v_{10})=f'(v_7)=3,f'(v_{11})=4,f'(v_{13})=5$. We then extend $f'$ to a 5-coloring $f$ of $G$ by coloring the deleted vertices in the order $v_5,v_4,v_6,v_3,v_1$. Clearly, $f(v_5)=2,f(v_4)=3,f(v_6)=4,f(v_3)=1$, and $f(v_1)=4$ is one extension scheme.

\begin{figure}[!h]
	\centering
	\includegraphics[width=3.5cm]{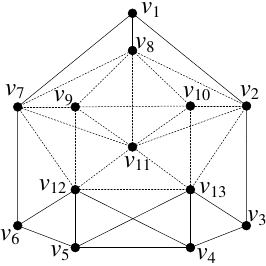}\\
	\caption{A graph $G$ on 13 vertices}\label{fig-1}
\end{figure}

\subsection{M{\scriptsize DD}C{\scriptsize OLOR} Algorithm} \label{sec-mddcolor}

Observe that using the greedy heuristic to color a graph depends heavily on a vertex ordering chosen for coloring the vertices individually. A suitable order can significantly reduce the number of colors the greedy coloring uses. Theoretically, there is always an ordering (called optimal ordering) based on which the greedy heuristic can construct an optimal coloring. However, it is difficult to determine an optimal ordering in advance. In this study, {\color{black} we establish a vertex ordering for coloring by integrating the concepts of $k$-core~\cite{seidman1983network} and mixed degree~\cite{zeng2013ranking}, which are delineated as follows.}

{\color{black}For a graph $G$, a $k$-core is a subgraph $H$ of $G$ such that every vertex in $H$ has degree at least $k$ and no vertex in $V(G)\setminus V(H)$ is adjacent to more than $k$ vertices in $H$. The maximum $k$ such that a vertex $v$ belongs to a $k$-core is called the \textit{shell number} of $v$. The \textit{core decomposition} of $G$ is to determine the shell numbers of all vertices of $G$, which can be achieved efficiently \cite{batagelj2003m}.}

For a vertex subset $S\subseteq V(G)$, the \emph{mixed degree of a vertex $v\in (V(G)\setminus S)$ associated with $S$} is defined as $d^m_{S}(v) = {d^r_{S}}({v}) + \lambda \times d^e_{S}(v)$,
where ${d^r_S}({v})$ and $d^e_{S}(v)$ denote the number of neighbors of $v$ in $V\setminus S$ and $S$, respectively, and weight coefficient $\lambda$ is a tunable real number in [0,1]. In this study, $\lambda$ is set to 0.7 \cite{maji2020systematic}.

\begin{algorithm}[htbp]

	\caption{\textsc{MddColor}($G_k$, $f', mdd\_set$)}\label{alo-mdd}
	\KwIn{An undirected graph $G_k = (V_k,E_k)$ and a coloring of $f'$}
	\KwOut{A coloring $f$ of $G_k$}
	
	$f\leftarrow \emptyset$;

    Sort $V_k$ according to core decomposition;

    \If{with probability $\alpha$}{
        $mdd\_set \leftarrow$ \textsc{MddSort}($G_k$)\;
        \ForEach{core layer vertex set $C \subseteq V_k$}{
            Sort $C$ ascending by $mdd\_set$;
        }
    }

	\ForEach{$v\in mdd\_Sort$ in reverse order}{
		
		$f(v)\leftarrow$ the minimum available color;// color $v$ with a minimum available color
		
		\If{$f(v)\geq color\_num(f')$}{
			
			\textsc{ReColor}($v$);
		}
		
		\If{$f(v)< color\_num(f')$} {
			
			$f \leftarrow f \cup f(v)$ ;	
		}\Else { $f \leftarrow f'$ and \Return {$f$}}
	}
	
	\Return{$f$}
	
\end{algorithm}

\begin{algorithm}[!h]
	\caption{$\textsc{MddSort}(G_k)$} \label{mdd_sort}

	\KwIn{An undirected graph $G_k = (V_k,E_k)$}
	\KwOut{A ordered vertex set $mdd\_set$}
	\While{$V \neq \emptyset$}{
		
		$S \leftarrow$ the set of vertices $v\in V$ with the minimum value of $d_{mdd\_Set}^m(v)$;
		
		$mdd\_set \leftarrow mdd\_set \cup S$;
		
		$V \leftarrow V\setminus S$;
	}
	
	\Return{$mdd\_set$}
	
\end{algorithm}

Based on the {\color{black}$k$-core} and mixed degree, this paper proposes a greedy heuristic $\textsc{MddColor()}$ for coloring graphs. As shown in Algorithm \ref{alo-mdd}, the algorithm has two parts:  ordering vertices {\color{black}(lines 2--6) and coloring vertices (lines 7--10)}. $\textsc{MddColor()}$ takes a graph $G_k$ and a coloring $f'$  as input, and outputs an improved coloring $f$ of $G_k$.  
{\color{black}
After initializing \( f \), the algorithm sorts the remaining vertices \( V_k \) based on core decomposition. Subsequently, with a probability of \( \alpha \), the vertices within each core layer are reordered according to the \( mdd\_set \) generated by \textsc{MddSort}() (see Algorithm \ref{mdd_sort}). The function \textsc{MddColor()} creates a vertex ordering for \( mdd\_set \) by iteratively selecting vertices with the minimum mixed degree associated with the current \( mdd\_set \).
During each iteration, the algorithm identifies the set \( S \) of vertices that possess the minimum value of \( d_{mdd\_Set}^m(v) \), updates \( mdd\_set \) by sequentially adding the vertices from \( S \), and removes these vertices from \( V \) (lines 2--4). It is noteworthy that the probabilistic reordering of vertices within each core layer is designed to enhance the diversity of coloring solutions.
} 

{\color{black}Observe that when constructing $mdd\_set$, its size increases with iteration}. Therefore, the mixed degree of a vertex $v (\notin S)$ that is adjacent to a vertex in $S$ but not in $S$ becomes smaller in the following iterations. This allows for such a vertex $v$ to be removed as early as possible. This observation may support the hypothesis that the later a vertex is deleted, the more challenging it becomes to assign an appropriate color to that vertex. For this, {\color{black}in the subsequent procedure of Algorithm~\ref{alo-mdd}}, the vertices are greedily colored in reverse order of $mdd\_set$ (lines 7--14), i.e., repeatedly selecting a vertex at the end of $mdd\_set$ and assigning a minimum available color to the vertex (line 8). Note that if the color assigned to a new vertex exceeds the number of colors used by an input coloring $f'$, then we attempt to avoid adding a new color by a recolor procedure $\textsc{ReColor}()$ proposed by Rossi and Ahmed \cite{rossi2014coloring} (lines 9--10); in addition, if $\textsc{ReColor}()$  fails to reduce the color number, then $\textsc{MddColor()}$ is terminated and the input coloring $f'$ is returned (lines 11--14).

{\color{black} 
Utilizing a binary heap to maintain saturation degrees allows the time complexity of \textsc{Dsatur} to be reduced from \(O(n^2)\) to \(O((n + m) \log n)\)~\cite{lewis2021guide}, significantly enhancing its efficiency for sparse graphs. In contrast, \textsc{MddColor} exhibits a time complexity of \(O(m + n^2)\) (see Theorem~\ref{theorem:complexity}), but it tends to perform better when establishing an upper bound on the chromatic number. To strike a balance between efficiency and accuracy, the algorithm employs \textsc{MddColor} during the initial iteration or when the working graph is reduced, switching to \textsc{Dsatur} in other cases.
}

{\color{black}
\begin{theorem} \label{theorem:complexity}
\textsc{MddColor} (Algorithm~\ref{alo-mdd}) runs in $O(m+n^2)$ time.
\end{theorem}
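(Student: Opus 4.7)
The plan is to decompose Algorithm~\ref{alo-mdd} into three phases and bound each one, showing that the dominant cost comes from \textsc{MddSort} (Algorithm~\ref{mdd_sort}). The phases are: (i) the core decomposition of line 2 together with the intra-layer reordering of lines 5--6; (ii) the \textsc{MddSort} subroutine invoked on line 4; and (iii) the greedy coloring loop of lines 7--14, including any \textsc{ReColor} calls. I would analyze each phase separately and then add the bounds.

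Phase (i) is $O(m+n)$: core decomposition runs in linear time by the standard Batagelj--Zaversnik bucket-sort algorithm~\cite{batagelj2003m}, and sorting the vertices inside each core layer by their position in $mdd\_set$ is an ordering of integers in $[1,n]$, so a single counting sort across all layers suffices. For phase (ii), the approach is to maintain an array $d^m[\cdot]$ of mixed degrees with respect to the current $mdd\_set$, initialized in $O(m+n)$. Each outer iteration of \textsc{MddSort} scans the remaining vertices to identify the minimum value and the set $S$ attaining it, contributing $O(|V|)$ per iteration and $O(n^2)$ cumulatively across the at most $n$ iterations. When each $v \in S$ is transferred to $mdd\_set$, updating the mixed degrees of its neighbors costs $O(d_{G_k}(v))$; summed over the execution these updates telescope to $O(\sum_{v} d_{G_k}(v)) = O(m)$. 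Hence \textsc{MddSort} runs in $O(m+n^2)$.

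For phase (iii), I would use a reusable length-$n$ boolean ``used-color'' array. For each vertex $v$, marking the colors of $N_{G_k}(v)$, scanning $1,2,\ldots,d_{G_k}(v)+1$ for the smallest free color, and then clearing the marks all cost $O(d_{G_k}(v))$, so the greedy assignments total $O(m+n)$. Each \textsc{ReColor} call from~\cite{rossi2014coloring} inspects $v$'s neighborhood together with two candidate color classes, which can be bounded crudely by $O(n)$ per call and $O(n^2)$ cumulatively. Adding the three phases yields $O(m+n^2)$, as claimed, tight whenever \textsc{MddSort} performs $\Theta(n)$ minimum-selection scans.

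The principal obstacle will be the careful accounting inside \textsc{MddSort}, because its outer loop extracts \emph{batches} of minimum-mixed-degree vertices rather than single vertices. One has to separate the per-iteration minimum-selection cost, which alone contributes the $n^2$ term, from the per-vertex neighborhood-update cost, which must be shown to telescope to $O(m)$ even after batch deletions and without double-counting edges between two vertices of the same batch $S$. A secondary subtlety is bounding \textsc{ReColor} tightly enough that its cumulative cost does not exceed $O(n^2)$, since a naive charge of $O(m)$ per call would give $O(mn)$ and break the stated bound on dense graphs.
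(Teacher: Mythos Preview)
Your decomposition and bounds are correct and yield the stated $O(m+n^2)$, but the accounting differs from the paper's in one notable respect. The paper bounds \textsc{MddSort} by $O(m+n)$, not $O(m+n^2)$: it implements Algorithm~\ref{mdd_sort} via a bucket structure \`a la Batagelj--Zaversnik, observing that the mixed-degree values live in a discrete range (initially $d^m=d_{G_k}(v)$, and each transfer of a neighbor to $mdd\_set$ shifts $d^m$ by the fixed amount $-(1-\lambda)$), so one can maintain buckets indexed by these discrete levels and always pop the current minimum bucket in $O(1)$ amortized time, with neighbor updates summing to $O(m)$. Consequently, in the paper the entire $n^2$ term is attributed to the coloring/\textsc{ReColor} loop of phase~(iii), and \textsc{MddSort} is \emph{not} the bottleneck---contrary to your opening sentence. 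Your naive minimum-scan analysis of \textsc{MddSort} is more elementary and sidesteps the subtlety of bucketing non-integer keys, at the cost of a looser per-component bound; since phase~(iii) already contributes $O(n^2)$ in both analyses, the overall theorem is unaffected. You may wish to soften the claim that \textsc{MddSort} dominates, since your own phase-(iii) bound shows a second, independent $O(n^2)$ contribution.
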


\begin{proof}
First, $k$-core decomposition can be executed in \(O(n + m)\) time \cite{batagelj2003m}. Second, we utilize bucket sort to implement \textsc{MDDSort} (Algorithm~\ref{mdd_sort}), organizing vertices into buckets according to their \(d_{mdd\_Set}^m\) values. The algorithm involves iteratively selecting the smallest non-empty bucket, removing the vertices from it, and adding them to \(S\) while concurrently updating the \(d_{mdd\_Set}^m\) values for the neighboring vertices in \(V \setminus S\). This procedure can be completed in \(O(m + n)\) time. For sorting the vertices within each core layer, we can simply insert them into their designated buckets based on their index positions in \(mdd\_set\), which can be achieved in \(O(n)\) time. Finally, concerning the coloring phase, the algorithm must color \(n\) vertices, potentially involving recoloring operations, which leads to an \(O(n^2)\) complexity. In conclusion, the overall time complexity of \textsc{MddColor} is \(O(m + n^2)\).
\end{proof}
}


\section{Experimental Evaluation}\label{sec-experiment}

We experimentally compared the proposed \textsc{HyColor} algorithm with the state-of-the-art algorithms {\color{black}for GCP} \textsc{FastColor}, \textsc{LS+I-Dsatur}, and {\color{black}\textsc{GC-SLIM}}.

\subsection{Datasets}
{\color{black}
The dataset consists of four prominent benchmarks: DIMACS \cite{Johnson1996CliquesCA}, DIMACS10 \cite{sanders2014benchmarking}, the Stanford Large Network Dataset Collection (SNAP) \cite{snapnets}, and the Network Data Repository Benchmark \cite{nr}. In total, these benchmarks encompass 209 instances.

\begin{itemize}
    \item DIMACS\footnote{\url{http://cedric.cnam.fr/~porumbed/graphs/}}: The benchmark comprises instances derived from the DIMACS Implementation Challenge. It incorporates several small, dense graphs that are extensively utilized for evaluating the performance of algorithms related to minimum vertex cover~\cite{luo2019local}, graph coloring~\cite{hebrard2019hybrid}, and other associated problems.

    \item DIMACS10\footnote{\url{https://www.cc.gatech.edu/dimacs10}}: The instances in this benchmark are drawn from the 10th DIMACS Implementation Challenge, which was originally created to assess the effectiveness of graph partitioning and clustering algorithms. These instances feature large, sparse graphs that are frequently employed to evaluate the performance of algorithms addressing various $\mathcal{NP}$-hard problems, including the minimum vertex cover~\cite{luo2019local} and minimum dominating set~\cite{zhu2024dual}.

    \item SNAP\footnote{\url{https://snap.stanford.edu/data}}. This benchmark comprises a diverse set of real-world networks, including social networks, communication networks, and web graphs, which are extensively used to test algorithms for problems such as the minimum dominating set~\cite{zhu2024dual} and related challenges. These networks are abstractions of real-world entities; for instance, `soc-Slashdot0902' is derived from the Slashdot social network as of February 2009. By obtaining high-quality colorings for such instances, it becomes possible to identify closely connected user groups, which can be leveraged for recommendation systems and targeted marketing.

    \item Network Data Repository\footnote{\url{https://networkrepository.com}}: This benchmark represents the largest network repository, featuring thousands of datasets across more than 30 domains. It offers a wide variety of instances ranging from small to large, including biological networks, social networks, web networks, road networks, and collaborative networks.

\end{itemize}
}

\subsection{Experimental Settings} \label{sec:setting}

{\color{black}
\textsc{HyColor} is implemented in C++ and compiled with gcc 9.4.0 using the ``-O3'' optimization option. All experiments were conducted on a server featuring an Intel Xeon w7-3445 CPU operating at 2.60 GHz, along with 128 GB of RAM, running Ubuntu 20.04.5 LTS.
We conducted a comparison of \textsc{HyColor} against three state-of-the-art heuristic algorithms for GCP: \textsc{FastColor}\footnote{\url{http://lcs.ios.ac.cn/~caisw/Color.html}}~\cite{lin2017reduction}, \textsc{LS+I-Dsatur}\footnote{\url{https://bitbucket.org/gkatsi/gc-cdcl/}}~\cite{hebrard2019hybrid}, and \textsc{GC-SLIM}\footnote{\url{https://github.com/ASchidler/coloring/}}~\cite{schidler2023sat}. For each instance, each algorithm was executed ten times using seeds 1 through 10, with a cutoff time of 60 seconds. The parameters for \textsc{FastColor}, \textsc{LS+I-Dsatur}, and \textsc{GC-SLIM} were set according to their default values, and the parameter $\alpha$ in \textsc{HyColor} (Algorthm~\ref{alo-1}) was set to 0.2 according to Section~\ref{sec:p}.

Due to space constraints, we present the minimum and average number of colors identified by each algorithm over ten runs for each instance, formatted as `Min(Avg)', respectively. More comprehensive details, including instance properties and the average time each algorithm takes to reach the best solution, are provided in the supplementary materials (Tables SI--SVI) of this paper.
It is worth noting that \textsc{GC-SLIM} employs an adjacency matrix to store graph data, which leads to a significant increase in memory requirements as the number of vertices grows. As a result, under the 128 GB memory limitation, graphs with more than 280,000 vertices cannot be processed. This threshold aligns with the value stated in the original paper. 
Furthermore, as the algorithms only check the time limit at the end of each iteration, a single iteration may exceed the cutoff time for some large-scale instances, resulting in reported times that surpass the specified threshold.
}

{\color{black}
The experimental results are presented in Tables \ref{tab-DIMACS} through \ref{tab-NR2}. Within these tables, an asterisk (`$*$') signifies that the algorithm confirms the final `Min' value as the optimal chromatic number. The smallest values for `Min' and `Avg' are emphasized in bold.
}

\subsection{Results on the DIMACS Benchmark} \label{exp:dimacs}
{\color{black}

Table \ref{tab-DIMACS} outlines the experimental results obtained from the DIMACS benchmark. Out of the 18 instances evaluated, \textsc{HyColor} achieved the best coloring in 11 instances (over 61\%) compared to the four algorithms tested. Specifically, it outperformed \textsc{FastColor} in nine instances, surpassed \textsc{LS + I-Dsatur} in six instances, and matched \textsc{GC-SLIM} in 11 instances.
Remarkably, \textsc{HyColor} achieved a strictly better coloring than the other algorithms in two instances. It demonstrated equal performance with \textsc{LS + I-Dsatur}, while slightly trailing behind \textsc{GC-SLIM}, which identified a strictly superior coloring in four instances. In comparison, \textsc{FastColor} was only able to provide a strictly better coloring in one instance.
Additionally, \textsc{HyColor}, \textsc{FastColor}, and \textsc{LS + I-Dsatur} each established that the coloring solutions were optimal in one instance, achieving the chromatic number for that instance. In summary, \textsc{HyColor} outperforms both \textsc{FastColor} and \textsc{LS + I-Dsatur} on the DIMACS benchmark, although it underperformed slightly against \textsc{GC-SLIM}, suggesting that \textsc{HyColor} is effective on small, dense graphs.
}

\begin{table}[tbhp]
  \centering
  \caption{Results on DIMACS Benchmark}
  \resizebox{\linewidth}{!}{
    \begin{tabular}{lrrrrrrrrrr}
    \toprule
    \multicolumn{1}{l}{\multirow{2}[2]{*}{Instance}} & \multicolumn{1}{c}{\textsc{FastColor}} & \multicolumn{1}{c}{\textsc{LS+I-DSatur}} & \multicolumn{1}{c}{\textsc{GC-LSIM}} & \multicolumn{1}{c}{\textsc{HyColor}} \\
    \cmidrule(lr{0.5em}){2-2} \cmidrule(lr{0.5em}){3-3} \cmidrule(lr{0.5em}){4-4} \cmidrule(lr{0.5em}){5-5}
    & Min(Avg) & Min(Avg)  & Min(Avg) & Min(Avg) \\
    \midrule
    C2000.5 & 206(206) & 205(206.4) & \textbf{182}(\textbf{182.6}) & 204(205.4) \\
    C2000.9 & 44(44.8) & 44(44.7) & \textbf{39}(\textbf{39}) & 44(44.8) \\
    C4000.5 & 377(377) & 382(382) & \textbf{343}(\textbf{343}) & 370(375.2) \\
    DSJC1000.5 & 112(112) & 112(112.6) & \textbf{99}(\textbf{99}) & 112(112.3) \\
    gen200\_p0.9\_44 & \textbf{7}(\textbf{7}) & 8(8) & \textbf{7}(\textbf{7}) & \textbf{7}(\textbf{7}) \\
    gen200\_p0.9\_55 & \textbf{7}(\textbf{7}) & \textbf{7}(\textbf{7}) & \textbf{7}(\textbf{7}) & \textbf{7}(\textbf{7}) \\
    gen400\_p0.9\_75 & \textbf{11}(11.1) & 12(12) & \textbf{11}(\textbf{11}) & \textbf{11}(11.4) \\
    hamming10-4 & 39(39.7) & \textbf{32}(\textbf{32}) & 34(34) & 39(39.7) \\
    hamming8-4 & \textbf{16}*(\textbf{16}) & \textbf{16}(\textbf{16}) & \textbf{16}(\textbf{16}) & \textbf{16}*(\textbf{16}) \\
    MANN\_a27 & 4(4) & \textbf{3}*(\textbf{3}) & 4(4) & 4(4) \\
    MANN\_a45 & \textbf{4}(\textbf{4}) & \textbf{4}(\textbf{4}) & \textbf{4}(\textbf{4}) & \textbf{4}(\textbf{4}) \\
    MANN\_a81 & \textbf{4}(\textbf{4}) & \textbf{4}(\textbf{4}) & \textbf{4}(\textbf{4}) & \textbf{4}(\textbf{4}) \\
    p\_hat1500-1 & 261(262.6) & 275(275) & \textbf{260}(\textbf{260}) & \textbf{260}(262) \\
    p\_hat1500-2 & 152(152.8) & 164(164) & 155(155) & \textbf{151}(\textbf{152.4}) \\
    p\_hat300-1 & \textbf{67}(\textbf{67.5}) & 68(68.6) & 68(68) & 68(68) \\
    p\_hat300-2 & \textbf{41}(41.9) & 42(42) & 43(43) & \textbf{41}(\textbf{41.6}) \\
    p\_hat700-1 & 137(137.8) & 151(151) & 138(138) & \textbf{136}(\textbf{137.3}) \\
    p\_hat700-2 & \textbf{81}(81.9) & 83(83.1) & 84(84) & \textbf{81}(\textbf{81.8}) \\
    \bottomrule
    \end{tabular}%
    }
  \label{tab-DIMACS}%
\end{table}%

\subsection{Results on the DIMACS10 Benchmark}
{\color{black}
Table~\ref{tab-DIMACS10} details the experimental results obtained from the DIMACS10 benchmark. Among the four algorithms assessed across all 36 instances, \textsc{HyColor} achieved the best coloring in 30 instances, representing over 83\% of the instances. It outperformed \textsc{FastColor} in 29 instances, surpassed \textsc{LS + I-Dsatur} in 24 instances, and exceeded \textsc{GC-SLIM} in 11 instances. Notably, \textsc{HyColor} identified a strictly superior coloring compared to the other algorithms in four instances, whereas \textsc{FastColor} achieved this in three instances, \textsc{LS + I-Dsatur} did not found strictly better colorings, and \textsc{GC-SLIM} achieved this also in three instances. Additionally, \textsc{HyColor}, \textsc{FastColor}, and \textsc{LS + I-Dsatur} demonstrated optimal coloring solutions in 23, 20, and 6 instances, respectively, successfully reaching the chromatic numbers for those instances.
In summary, \textsc{HyColor} outperforms the other algorithms in the DIMACS10 benchmark, exhibiting a superior ability to deliver better colorings in these sparse instances while also attaining optimal colorings in a greater number of instances.
}

\begin{table}[tbp]
  \centering
  \caption{Results on DIMACS10 Benchmark}
  \resizebox{\linewidth}{!}{
    \begin{tabular}{lrrrrrrrrrr}
    \toprule
   \multicolumn{1}{l}{\multirow{2}[2]{*}{Instance}} & \multicolumn{1}{c}{\textsc{FastColor}} & \multicolumn{1}{c}{\textsc{LS+I-DSatur}} & \multicolumn{1}{c}{\textsc{GC-LSIM}} & \multicolumn{1}{c}{\textsc{HyColor}} \\
    \cmidrule(lr{0.5em}){2-2} \cmidrule(lr{0.5em}){3-3} \cmidrule(lr{0.5em}){4-4} \cmidrule(lr{0.5em}){5-5}
    & Min(Avg) & Min(Avg)  & Min(Avg) & Min(Avg) \\
    \midrule
    as-22july06 & \textbf{17}*(\textbf{17}) & \textbf{17}*(\textbf{17}) & \textbf{17}(\textbf{17}) & \textbf{17}*(\textbf{17}) \\
    333SP & \textbf{5}(\textbf{5}) & \textbf{5}*(\textbf{5}) & N/A & \textbf{5}(\textbf{5}) \\
    audikw\_1 & 45(45) & 44(44) & N/A & \textbf{42}(\textbf{42}) \\
    belgium\_osm & \textbf{3}*(\textbf{3}) & \textbf{3}(\textbf{3}) & N/A & \textbf{3}*(\textbf{3}) \\
    cage15 & \textbf{14}(\textbf{14}) & 16(16) & N/A & \textbf{14}(\textbf{14}) \\
    caidaRouterLevel & \textbf{17}*(17.8) & \textbf{17}*(17.1) & \textbf{17}(\textbf{17}) & \textbf{17}*(\textbf{17}) \\
    citationCiteseer & \textbf{13}*(\textbf{13}) & \textbf{13}(\textbf{13}) & \textbf{13}(\textbf{13}) & \textbf{13}*(\textbf{13}) \\
    cnr-2000 & \textbf{84}*(\textbf{84}) & \textbf{84}(\textbf{84}) & N/A & \textbf{84}*(\textbf{84}) \\
    coAuthorsCiteseer & \textbf{87}*(\textbf{87}) & \textbf{87}(\textbf{87}) & \textbf{87}(\textbf{87}) & \textbf{87}*(\textbf{87}) \\
    coAuthorsDBLP & \textbf{115}*(\textbf{115}) & \textbf{115}(\textbf{115}) & N/A & \textbf{115}*(\textbf{115}) \\
    cond-mat-2005 & \textbf{30}*(\textbf{30}) & \textbf{30}(\textbf{30}) & \textbf{30}(\textbf{30}) & \textbf{30}*(\textbf{30}) \\
    coPapersCiteseer & \textbf{845}*(\textbf{845}) & \textbf{845}(\textbf{845}) & N/A & \textbf{845}*(\textbf{845}) \\
    coPapersDBLP & \textbf{337}*(\textbf{337}) & \textbf{337}(\textbf{337}) & N/A & \textbf{337}*(\textbf{337}) \\
    ecology1 & \textbf{2}*(\textbf{2}) & \textbf{2}(\textbf{2}) & N/A & \textbf{2}*(\textbf{2}) \\
    eu-2005 & \textbf{387}*(\textbf{387}) & \textbf{387}(\textbf{387}) & N/A & \textbf{387}*(\textbf{387}) \\
    G\_n\_pin\_pout & 6(6) & 6*(6) & \textbf{5}(\textbf{5}) & 6(6) \\
    in-2004 & \textbf{489}*(\textbf{489}) & \textbf{489}(\textbf{489}) & N/A & \textbf{489}*(\textbf{489}) \\
    kron\_g500-logn16 & 158(159.4) & 168(168) & 155(155) & \textbf{153}(\textbf{153.6}) \\
    kron\_g500-logn17 & \textbf{192}(193.3) & 208(208) & 195(195) & 193(\textbf{193.2}) \\
    kron\_g500-logn18 & 241(243.5) & 257(257) & 244(244.2) & \textbf{240}(\textbf{242.9}) \\
    kron\_g500-logn19 & \textbf{301}(\textbf{301}) & 319(319) & N/A & 302(302.4) \\
    kron\_g500-logn20 & 375(\textbf{375}) & 394(394) & N/A & \textbf{374}(376.3) \\
    kron\_g500-logn21 & \textbf{462}(\textbf{466.2}) & 485(485) & N/A & 464(466.6) \\
    ldoor & \textbf{33}(\textbf{33}) & 35(35) & N/A & \textbf{33}(\textbf{33}) \\
    luxembourg\_osm & \textbf{3}*(\textbf{3}) & \textbf{3}(\textbf{3}) & \textbf{3}(\textbf{3}) & \textbf{3}*(\textbf{3}) \\
    preferentialAttachment & \textbf{6}(\textbf{6}) & \textbf{6}*(\textbf{6}) & \textbf{6}(\textbf{6}) & \textbf{6}*(\textbf{6}) \\
    rgg\_n\_2\_17\_s0 & \textbf{15}*(\textbf{15}) & \textbf{15}(\textbf{15}) & \textbf{15}(\textbf{15}) & \textbf{15}*(\textbf{15}) \\
    rgg\_n\_2\_19\_s0 & \textbf{18}*(\textbf{18}) & \textbf{18}(\textbf{18}) & N/A & \textbf{18}*(\textbf{18}) \\
    rgg\_n\_2\_20\_s0 & \textbf{17}*(\textbf{17}) & \textbf{17}(\textbf{17}) & N/A & \textbf{17}*(\textbf{17}) \\
    rgg\_n\_2\_21\_s0 & \textbf{19}(\textbf{19}) & \textbf{19}(\textbf{19}) & N/A & \textbf{19}*(\textbf{19}) \\
    rgg\_n\_2\_22\_s0 & \textbf{20}*(\textbf{20}) & \textbf{20}(\textbf{20}) & N/A & \textbf{20}*(\textbf{20}) \\
    rgg\_n\_2\_23\_s0 & \textbf{21}*(\textbf{21}) & \textbf{21}(\textbf{21}) & N/A & \textbf{21}*(\textbf{21}) \\
    rgg\_n\_2\_24\_s0 & \textbf{21}(\textbf{21}) & \textbf{21}(\textbf{21}) & N/A & \textbf{21}*(\textbf{21}) \\
    smallworld & 7(7) & 8*(8) & \textbf{6}(\textbf{6}) & 7(7) \\
    uk-2002 & \textbf{944}*(\textbf{944}) & \textbf{944}(\textbf{944}) & N/A & \textbf{944}*(\textbf{944}) \\
    wave & 8(8) & 9(9) & \textbf{7}(\textbf{7}) & 8(8) \\
    \bottomrule
    \end{tabular}%
    }
  \label{tab-DIMACS10}%
\end{table}%

\subsection{Results on the SNAP Benchmark}
{\color{black}

Table \ref{tab-SNAP} presents the experimental results from the SNAP benchmark. Among the 22 instances analyzed, \textsc{HyColor} achieved the best coloring performance across all of them (100\%), surpassing the other three algorithms. In particular, \textsc{HyColor} outperformed \textsc{FastColor} in 17 instances, excelled over \textsc{LS + I-Dsatur} in 15 instances, and surpassed \textsc{GC-SLIM} in eight instances. Notably, \textsc{HyColor} identified a strictly better coloring than the other algorithms in five instances, whereas none of the other methods achieved a strictly better solution in any instance. 
Furthermore, \textsc{HyColor}, along with \textsc{FastColor} and \textsc{LS + I-Dsatur}, provided optimal coloring solutions in 11, 10, and one instance, respectively, achieving the chromatic number for those instances. Overall, \textsc{HyColor} distinctly outperforms the other algorithms on the SNAP benchmark, showcasing its ability to consistently find superior colorings and attain optimal colorings in more instances, particularly excelling with sparse graphs.
}

\begin{table}[tbhp]
  \centering
  \caption{Results on SNAP Benchmark}
  \resizebox{\linewidth}{!}{
    \begin{tabular}{lrrrrrrrrrr}
    \toprule
   \multicolumn{1}{l}{\multirow{2}[2]{*}{Instance}} & \multicolumn{1}{c}{\textsc{FastColor}} & \multicolumn{1}{c}{\textsc{LS+I-DSatur}} & \multicolumn{1}{c}{\textsc{GC-LSIM}} & \multicolumn{1}{c}{\textsc{HyColor}} \\
    \cmidrule(lr{0.5em}){2-2} \cmidrule(lr{0.5em}){3-3} \cmidrule(lr{0.5em}){4-4} \cmidrule(lr{0.5em}){5-5}
    & Min(Avg) & Min(Avg)  & Min(Avg) & Min(Avg) \\
    \midrule
    amazon0302 & \textbf{7}*(\textbf{7}) & \textbf{7}(\textbf{7}) & \textbf{7}(\textbf{7}) & \textbf{7}*(\textbf{7}) \\
    amazon0312 & \textbf{11}*(\textbf{11}) & \textbf{11}(\textbf{11}) & N/A & \textbf{11}*(\textbf{11}) \\
    amazon0505 & \textbf{11}*(\textbf{11}) & \textbf{11}(\textbf{11}) & N/A & \textbf{11}*(\textbf{11}) \\
    amazon0601 & \textbf{11}*(\textbf{11}) & \textbf{11}(\textbf{11}) & N/A & \textbf{11}*(\textbf{11}) \\
    cit-HepPh & \textbf{411}*(\textbf{411}) & \textbf{411}(\textbf{411}) & \textbf{411}(\textbf{411}) & \textbf{411}*(\textbf{411}) \\
    cit-HepTh & \textbf{562}*(\textbf{562}) & \textbf{562}(\textbf{562}) & \textbf{562}(\textbf{562}) & \textbf{562}*(\textbf{562}) \\
    cit-Patents & 12(12) & 12(12) & N/A & \textbf{11}*(\textbf{11}) \\
    email-EuAll & \textbf{18}(18.7) & 19(19.8) & 21(21) & \textbf{18}(\textbf{18.6}) \\
    p2p-Gnutella04 & \textbf{5}(\textbf{5}) & \textbf{5}(\textbf{5}) & \textbf{5}(\textbf{5}) & \textbf{5}(\textbf{5}) \\
    p2p-Gnutella24 & \textbf{5}(\textbf{5}) & \textbf{5}(\textbf{5}) & \textbf{5}(\textbf{5}) & \textbf{5}(\textbf{5}) \\
    p2p-Gnutella25 & \textbf{5}(\textbf{5}) & \textbf{5}(\textbf{5}) & \textbf{5}(\textbf{5}) & \textbf{5}(\textbf{5}) \\
    p2p-Gnutella30 & \textbf{5}(\textbf{5}) & \textbf{5}(\textbf{5}) & \textbf{5}(\textbf{5}) & \textbf{5}(\textbf{5}) \\
    p2p-Gnutella31 & \textbf{5}(\textbf{5}) & \textbf{5}(\textbf{5}) & \textbf{5}(\textbf{5}) & \textbf{5}(\textbf{5}) \\
    soc-Epinions1 & 30(30.7) & 30(30) & 30(30) & \textbf{28}(\textbf{28.5}) \\
    soc-Slashdot0811 & 30(30.5) & 30(30) & 30(30) & \textbf{29}(\textbf{29}) \\
    soc-Slashdot0902 & 31(31.6) & 31(31) & 31(31) & \textbf{29}(\textbf{29.9}) \\
    web-BerkStan & \textbf{201}*(\textbf{201}) & \textbf{201}(\textbf{201}) & N/A & \textbf{201}*(\textbf{201}) \\
    web-Google & \textbf{44}*(\textbf{44}) & \textbf{44}*(\textbf{44}) & N/A & \textbf{44}*(\textbf{44}) \\
    web-NotreDame & \textbf{155}*(\textbf{155}) & \textbf{155}(\textbf{155}) & N/A & \textbf{155}*(\textbf{155}) \\
    web-Stanford & \textbf{61}*(\textbf{61}) & \textbf{61}(\textbf{61}) & N/A & \textbf{61}*(\textbf{61}) \\
    wiki-Talk & 52(52) & 56(56) & N/A & \textbf{48}(\textbf{48.9}) \\
    Wiki-Vote & \textbf{22}(\textbf{22}) & 23(23) & 23(23) & \textbf{22}(\textbf{22}) \\
    \bottomrule
    \end{tabular}%
    }
  \label{tab-SNAP}%
\end{table}%

\subsection{Results on the Network Data Repository Benchmark} \label{exp:nr}
{\color{black}

Tables \ref{tab-NR} and \ref{tab-NR2} display the experimental results from the Network Data Repository benchmark. Among the 133 instances in this benchmark, \textsc{HyColor} achieved the best coloring in 131 of them, which accounts for over 98\%. This performance surpassed the other three algorithms, as \textsc{HyColor} outperformed \textsc{FastColor} in 99 instances, \textsc{LS + I-Dsatur} in 91 instances, and \textsc{GC-SLIM} in 77 instances. Notably, \textsc{HyColor} found a strictly superior coloring compared to the other algorithms in 23 instances, while \textsc{FastColor}, \textsc{LS + I-Dsatur}, and \textsc{GC-SLIM} achieved strictly better colorings in zero, one, and one instances, respectively.
Additionally, \textsc{HyColor}, \textsc{FastColor}, and \textsc{LS + I-Dsatur} demonstrated optimal colorings in 93, 83, and 15 instances, respectively, achieving the chromatic number for each of those instances.
In summary, \textsc{HyColor} outperforms the other algorithms in the Network Data Repository benchmark, showcasing a remarkable ability to identify better colorings and attain optimal colorings across a greater number of instances, particularly excelling in the context of sparse graphs.

}

\begin{table}[htbp]
  \centering
  \caption{Results on Network Data Repository Benchmark (I)}
  \resizebox{\linewidth}{!}{
    \begin{tabular}{lrrrrrrrrrr}
    \toprule
    \multicolumn{1}{l}{\multirow{2}[2]{*}{Instance}} & \multicolumn{1}{c}{\textsc{FastColor}} & \multicolumn{1}{c}{\textsc{LS+I-DSatur}} & \multicolumn{1}{c}{\textsc{GC-LSIM}} & \multicolumn{1}{c}{\textsc{HyColor}} \\
    \cmidrule(lr{0.5em}){2-2} \cmidrule(lr{0.5em}){3-3} \cmidrule(lr{0.5em}){4-4} \cmidrule(lr{0.5em}){5-5}
    & Min(Avg) & Min(Avg)  & Min(Avg) & Min(Avg) \\
    \midrule
    bio-celegans & \textbf{9}*(\textbf{9}) & \textbf{9}*(\textbf{9}) & \textbf{9}(\textbf{9}) & \textbf{9}*(\textbf{9}) \\
    bio-diseasome & \textbf{11}*(\textbf{11}) & \textbf{11}(\textbf{11}) & \textbf{11}(\textbf{11}) & \textbf{11}*(\textbf{11}) \\
    bio-dmela & \textbf{7}*(\textbf{7}) & \textbf{7}*(\textbf{7}) & \textbf{7}(\textbf{7}) & \textbf{7}*(\textbf{7}) \\
    bio-yeast & \textbf{6}*(\textbf{6}) & \textbf{6}(\textbf{6}) & \textbf{6}(\textbf{6}) & \textbf{6}*(\textbf{6}) \\
    bn-human*890*on\_1 & 99(107.1) & 113(113) & N/A & \textbf{96}(\textbf{98.7}) \\
    bn-human*914*on\_2 & 284(\textbf{284}) & 309(309) & N/A & \textbf{283}(284.2) \\
    bn-human*916*on\_1 & 256(256) & 277(277) & N/A & \textbf{250}(\textbf{252.3}) \\
    bn-human*110650 & 241(241) & 251(251) & N/A & \textbf{239}(\textbf{240.6}) \\
    bn-human*125334 & 164(164) & 175(175) & N/A & \textbf{162}(\textbf{163.9}) \\
    ca-AstroPh & \textbf{57}*(\textbf{57}) & \textbf{57}(\textbf{57}) & \textbf{57}(\textbf{57}) & \textbf{57}*(\textbf{57}) \\
    ca-citeseer & \textbf{87}*(\textbf{87}) & \textbf{87}(\textbf{87}) & \textbf{87}(\textbf{87}) & \textbf{87}*(\textbf{87}) \\
    ca-coauthors-dblp & \textbf{337}*(\textbf{337}) & \textbf{337}(\textbf{337}) & N/A & \textbf{337}*(\textbf{337}) \\
    ca-CondMat & \textbf{26}*(\textbf{26}) & \textbf{26}(\textbf{26}) & \textbf{26}(\textbf{26}) & \textbf{26}*(\textbf{26}) \\
    ca-CSphd & \textbf{3}*(\textbf{3}) & \textbf{3}(\textbf{3}) & \textbf{3}(\textbf{3}) & \textbf{3}*(\textbf{3}) \\
    ca-dblp-2010 & \textbf{75}*(\textbf{75}) & \textbf{75}(\textbf{75}) & \textbf{75}(\textbf{75}) & \textbf{75}*(\textbf{75}) \\
    ca-dblp-2012 & \textbf{114}*(\textbf{114}) & \textbf{114}(\textbf{114}) & N/A & \textbf{114}*(\textbf{114}) \\
    ca-Erdos992 & \textbf{8}*(\textbf{8}) & \textbf{8}(\textbf{8}) & \textbf{8}(\textbf{8}) & \textbf{8}*(\textbf{8}) \\
    ca-GrQc & \textbf{44}*(\textbf{44}) & \textbf{44}(\textbf{44}) & \textbf{44}(\textbf{44}) & \textbf{44}*(\textbf{44}) \\
    ca-HepPh & \textbf{239}*(\textbf{239}) & \textbf{239}(\textbf{239}) & \textbf{239}(\textbf{239}) & \textbf{239}*(\textbf{239}) \\
    ca-hollywood-2009 & \textbf{2209}*(\textbf{2209}) & \textbf{2209}*(\textbf{2209}) & N/A & \textbf{2209}*(\textbf{2209}) \\
    ca-MathSciNet & \textbf{25}*(\textbf{25}) & \textbf{25}(\textbf{25}) & N/A & \textbf{25}*(\textbf{25}) \\
    ca-netscience & \textbf{9}*(\textbf{9}) & \textbf{9}(\textbf{9}) & \textbf{9}(\textbf{9}) & \textbf{9}*(\textbf{9}) \\
    ia-email-EU & \textbf{13}(\textbf{13}) & \textbf{13}(\textbf{13}) & 14(14) & \textbf{13}(\textbf{13}) \\
    ia-email-univ & \textbf{12}*(\textbf{12}) & \textbf{12}(\textbf{12}) & \textbf{12}(\textbf{12}) & \textbf{12}*(\textbf{12}) \\
    ia-enron-large & \textbf{23}(\textbf{23.8}) & 24(24.6) & 25(25) & \textbf{23}(\textbf{23.8}) \\
    ia-enron-only & \textbf{8}*(\textbf{8}) & \textbf{8}(\textbf{8}) & \textbf{8}(\textbf{8}) & \textbf{8}*(\textbf{8}) \\
    ia-fb-messages & \textbf{6}(\textbf{6.1}) & 7(7) & 7(7) & \textbf{6}(\textbf{6.1}) \\
    ia-infect-dublin & \textbf{16}*(\textbf{16}) & \textbf{16}(\textbf{16}) & \textbf{16}(\textbf{16}) & \textbf{16}*(\textbf{16}) \\
    ia-infect-hyper & \textbf{16}(\textbf{16}) & \textbf{16}(\textbf{16}) & \textbf{16}(\textbf{16}) & \textbf{16}(\textbf{16}) \\
    ia-reality & \textbf{5}*(\textbf{5}) & \textbf{5}*(\textbf{5}) & \textbf{5}(\textbf{5}) & \textbf{5}*(\textbf{5}) \\
    ia-wiki-Talk & 26(26) & 26(26) & 26(26) & \textbf{25}(\textbf{25}) \\
    inf-power & \textbf{6}*(\textbf{6}) & \textbf{6}(\textbf{6}) & \textbf{6}(\textbf{6}) & \textbf{6}*(\textbf{6}) \\
    inf-roadNet-CA & \textbf{4}*(\textbf{4}) & \textbf{4}(\textbf{4}) & N/A & \textbf{4}*(\textbf{4}) \\
    inf-roadNet-PA & \textbf{4}*(\textbf{4}) & \textbf{4}(\textbf{4}) & N/A & \textbf{4}*(\textbf{4}) \\
    inf-road-usa & \textbf{4}*(\textbf{4}) & \textbf{4}(\textbf{4}) & N/A & \textbf{4}*(\textbf{4}) \\
    rec-amazon & \textbf{5}*(\textbf{5}) & \textbf{5}(\textbf{5}) & \textbf{5}(\textbf{5}) & \textbf{5}*(\textbf{5}) \\
    rt-retweet & \textbf{4}*(\textbf{4}) & \textbf{4}(\textbf{4}) & \textbf{4}(\textbf{4}) & \textbf{4}*(\textbf{4}) \\
    rt-retweet-crawl & \textbf{13}*(\textbf{13}) & \textbf{13}(\textbf{13}) & N/A & \textbf{13}*(\textbf{13}) \\
    rt-twitter-copen & \textbf{4}*(\textbf{4}) & \textbf{4}*(\textbf{4}) & \textbf{4}(\textbf{4}) & \textbf{4}*(\textbf{4}) \\
    scc\_enron-only & \textbf{120}*(\textbf{120}) & \textbf{120}(\textbf{120}) & \textbf{120}(\textbf{120}) & \textbf{120}*(\textbf{120}) \\
    scc\_fb-forum & \textbf{266}*(\textbf{266}) & \textbf{266}(\textbf{266}) & \textbf{266}(\textbf{266}) & \textbf{266}*(\textbf{266}) \\
    scc\_fb-messages & \textbf{707}*(\textbf{707}) & \textbf{707}(\textbf{707}) & \textbf{707}(\textbf{707}) & \textbf{707}*(\textbf{707}) \\
    scc\_infect-dublin & \textbf{84}*(\textbf{84}) & \textbf{84}(\textbf{84}) & \textbf{84}(\textbf{84}) & \textbf{84}*(\textbf{84}) \\
    scc\_infect-hyper & \textbf{106}*(\textbf{106}) & \textbf{106}(\textbf{106}) & \textbf{106}(\textbf{106}) & \textbf{106}*(\textbf{106}) \\
    scc\_reality & \textbf{1236}*(\textbf{1236}) & \textbf{1236}*(\textbf{1236}) & \textbf{1236}(\textbf{1236}) & \textbf{1236}*(\textbf{1236}) \\
    scc\_retweet & \textbf{166}*(\textbf{166}) & 167(167) & \textbf{166}(\textbf{166}) & \textbf{166}*(\textbf{166}) \\
    scc\_retweet-crawl & \textbf{20}*(\textbf{20}) & \textbf{20}(\textbf{20}) & N/A & \textbf{20}*(\textbf{20}) \\
    scc\_rt\_alwefaq & \textbf{16}*(\textbf{16}) & \textbf{16}(\textbf{16}) & \textbf{16}(\textbf{16}) & \textbf{16}*(\textbf{16}) \\
    scc\_rt\_assad & \textbf{8}*(\textbf{8}) & \textbf{8}(\textbf{8}) & \textbf{8}(\textbf{8}) & \textbf{8}*(\textbf{8}) \\
    scc\_rt\_bahrain & \textbf{8}*(\textbf{8}) & \textbf{8}(\textbf{8}) & \textbf{8}(\textbf{8}) & \textbf{8}*(\textbf{8}) \\
    scc\_rt\_barackobama & \textbf{10}*(\textbf{10}) & \textbf{10}(\textbf{10}) & \textbf{10}(\textbf{10}) & \textbf{10}*(\textbf{10}) \\
    scc\_rt\_damascus & \textbf{5}*(\textbf{5}) & \textbf{5}(\textbf{5}) & \textbf{5}(\textbf{5}) & \textbf{5}*(\textbf{5}) \\
    scc\_rt\_dash & \textbf{6}*(\textbf{6}) & \textbf{6}(\textbf{6}) & \textbf{6}(\textbf{6}) & \textbf{6}*(\textbf{6}) \\
    scc\_rt\_gmanews & \textbf{22}*(\textbf{22}) & \textbf{22}(\textbf{22}) & \textbf{22}(\textbf{22}) & \textbf{22}*(\textbf{22}) \\
    scc\_rt\_gop & \textbf{2}*(\textbf{2}) & \textbf{2}(\textbf{2}) & \textbf{2}(\textbf{2}) & \textbf{2}*(\textbf{2}) \\
    scc\_rt\_http & \textbf{3}*(\textbf{3}) & \textbf{3}(\textbf{3}) & \textbf{3}(\textbf{3}) & \textbf{3}*(\textbf{3}) \\
    scc\_rt\_israel & \textbf{2}*(\textbf{2}) & \textbf{2}(\textbf{2}) & \textbf{2}(\textbf{2}) & \textbf{2}*(\textbf{2}) \\
    scc\_rt\_justinbieber & \textbf{17}*(\textbf{17}) & \textbf{17}(\textbf{17}) & \textbf{17}(\textbf{17}) & \textbf{17}*(\textbf{17}) \\
    scc\_rt\_ksa & \textbf{6}*(\textbf{6}) & \textbf{6}(\textbf{6}) & \textbf{6}(\textbf{6}) & \textbf{6}*(\textbf{6}) \\
    scc\_rt\_lebanon & \textbf{2}*(\textbf{2}) & \textbf{2}(\textbf{2}) & \textbf{2}(\textbf{2}) & \textbf{2}*(\textbf{2}) \\
    scc\_rt\_libya & \textbf{3}*(\textbf{3}) & \textbf{3}(\textbf{3}) & \textbf{3}(\textbf{3}) & \textbf{3}*(\textbf{3}) \\
    scc\_rt\_lolgop & \textbf{42}*(\textbf{42}) & \textbf{42}(\textbf{42}) & \textbf{42}(\textbf{42}) & \textbf{42}*(\textbf{42}) \\
    scc\_rt\_mittromney & \textbf{5}*(\textbf{5}) & \textbf{5}(\textbf{5}) & \textbf{5}(\textbf{5}) & \textbf{5}*(\textbf{5}) \\
    scc\_twitter-copen & \textbf{581}*(\textbf{581}) & \textbf{581}(\textbf{581}) & \textbf{581}(\textbf{581}) & \textbf{581}*(\textbf{581}) \\
    sc-ldoor & \textbf{35}(\textbf{35}) & \textbf{35}(\textbf{35}) & N/A & \textbf{35}(\textbf{35}) \\
    sc-msdoor & 35(35) & \textbf{34}(\textbf{34.9}) & N/A & 35(35) \\
    soc-academia & \textbf{17}*(\textbf{17}) & \textbf{17}(\textbf{17}) & \textbf{17}(\textbf{17}) & \textbf{17}*(\textbf{17}) \\
    soc-BlogCatalog & 82(82) & 85(85) & 77(77) & \textbf{75}(\textbf{76}) \\
    soc-brightkite & \textbf{37}*(\textbf{37}) & \textbf{37}*(\textbf{37}) & \textbf{37}(\textbf{37}) & \textbf{37}*(\textbf{37}) \\
    soc-buzznet & 54(54) & 60(60) & 50(50) & \textbf{49}(\textbf{49.8}) \\
    soc-delicious & \textbf{21}*(\textbf{21}) & \textbf{21}(\textbf{21}) & N/A & \textbf{21}*(\textbf{21}) \\
    soc-digg & 63(63) & 73(73) & N/A & \textbf{55}(\textbf{55.4}) \\
    soc-dolphins & \textbf{5}*(\textbf{5}) & \textbf{5}(\textbf{5}) & \textbf{5}(\textbf{5}) & \textbf{5}*(\textbf{5}) \\
    socfb-A-anon & 28(28) & 34(34) & N/A & \textbf{26}(\textbf{26.2}) \\
    socfb-B-anon & 25(27.4) & 29(29) & N/A & \textbf{24}*(\textbf{24.3}) \\
    socfb-Berkeley13 & 43(43.4) & 44(44) & \textbf{42}(\textbf{42}) & \textbf{42}*(42.8) \\
    socfb-CMU & \textbf{45}*(45.5) & \textbf{45}*(\textbf{45}) & \textbf{45}(\textbf{45}) & \textbf{45}*(\textbf{45}) \\
    socfb-Duke14 & 43(43) & 45(45) & 46(46) & \textbf{40}(\textbf{40.3}) \\
    socfb-Indiana & 50(50) & 51(51) & \textbf{48}(\textbf{48}) & \textbf{48}*(48.5) \\
    socfb-MIT & 38(39.8) & 40(40) & 39(39) & \textbf{37}(\textbf{37.6}) \\
    socfb-OR & 32(32) & 34(34) & \textbf{31}(\textbf{31}) & \textbf{31}(\textbf{31}) \\
    socfb-Penn94 & 45(45) & 46(46) & \textbf{44}(\textbf{44}) & \textbf{44}*(\textbf{44}) \\
    socfb-Stanford3 & 55(55.2) & \textbf{51}*(\textbf{51}) & \textbf{51}(\textbf{51}) & \textbf{51}*(\textbf{51}) \\
    socfb-Texas84 & 54(54) & 57(57) & 55(55) & \textbf{52}(\textbf{52.9}) \\
    socfb-uci-uni & 8(8) & 8(8) & N/A & \textbf{7}(\textbf{7}) \\
    \bottomrule
    \end{tabular}%
    }
  \label{tab-NR}%
\end{table}%

\begin{table}[htbp]
  \centering
  \caption{Results on Network Data Repository Benchmark (II)}
  \resizebox{\linewidth}{!}{
    \begin{tabular}{lrrrrrrrrrr}
    \toprule
   \multicolumn{1}{l}{\multirow{2}[2]{*}{Instance}} & \multicolumn{1}{c}{\textsc{FastColor}} & \multicolumn{1}{c}{\textsc{LS+I-DSatur}} & \multicolumn{1}{c}{\textsc{GC-LSIM}} & \multicolumn{1}{c}{\textsc{HyColor}} \\
    \cmidrule(lr{0.5em}){2-2} \cmidrule(lr{0.5em}){3-3} \cmidrule(lr{0.5em}){4-4} \cmidrule(lr{0.5em}){5-5}
    & Min(Avg) & Min(Avg)  & Min(Avg) & Min(Avg) \\
    \midrule
    socfb-UCLA & \textbf{51}*(\textbf{51}) & \textbf{51}(\textbf{51}) & \textbf{51}(\textbf{51}) & \textbf{51}*(\textbf{51}) \\
    socfb-UConn & \textbf{50}(\textbf{50}) & \textbf{50}*(\textbf{50}) & \textbf{50}(\textbf{50}) & \textbf{50}*(\textbf{50}) \\
    socfb-UCSB37 & \textbf{53}(\textbf{53}) & \textbf{53}*(\textbf{53}) & 54(54) & \textbf{53}*(\textbf{53}) \\
    socfb-UF & 56(56.9) & 61(61) & \textbf{55}(\textbf{55}) & 56(56.9) \\
    socfb-UIllinois & \textbf{57}(\textbf{57}) & 58(58) & 58(58) & \textbf{57}*(\textbf{57}) \\
    socfb-Wisconsin87 & 39(39) & 41(41) & \textbf{38}(\textbf{38}) & \textbf{38}(38.4) \\
    soc-flickr & 101(101) & 108(108) & N/A & \textbf{93}(\textbf{94.4}) \\
    soc-flickr-und & 167(167.7) & 183(183) & N/A & \textbf{158}(\textbf{162}) \\
    soc-flixster & 36(36) & 39(39) & N/A & \textbf{35}(\textbf{35}) \\
    soc-FourSquare & \textbf{31}(31.3) & 34(34) & N/A & \textbf{31}(\textbf{31.1}) \\
    soc-google-plus & \textbf{66}*(\textbf{66}) & 67(67) & \textbf{66}(\textbf{66}) & \textbf{66}(\textbf{66}) \\
    soc-gowalla & \textbf{29}*(\textbf{29}) & \textbf{29}(\textbf{29}) & \textbf{29}(\textbf{29}) & \textbf{29}*(\textbf{29}) \\
    soc-karate & \textbf{5}*(\textbf{5}) & \textbf{5}(\textbf{5}) & \textbf{5}(\textbf{5}) & \textbf{5}*(\textbf{5}) \\
    soc-lastfm & 21(21) & 23(23) & N/A & \textbf{19}(\textbf{19.2}) \\
    soc-livejournal & \textbf{214}*(\textbf{214}) & \textbf{214}(\textbf{214}) & N/A & \textbf{214}*(\textbf{214}) \\
    soc-LiveJournal1 & \textbf{321}(\textbf{321}) & \textbf{321}(\textbf{321}) & N/A & \textbf{321}*(\textbf{321}) \\
    soc-LiveMocha & 33(33) & \textbf{26}(\textbf{26}) & 27(27) & \textbf{26}(\textbf{26}) \\
    soc-orkut & 71(\textbf{71}) & 77(77) & N/A & \textbf{68}(71.1) \\
    soc-pokec & \textbf{29}*(\textbf{29}) & \textbf{29}(\textbf{29}) & N/A & \textbf{29}*(\textbf{29}) \\
    soc-sign-Slashdot081106 & \textbf{29}(\textbf{29}) & 31(31) & 30(30) & \textbf{29}(\textbf{29}) \\
    soc-sign-Slashdot090221 & \textbf{29}(29.7) & 31(31) & 31(31) & \textbf{29}(\textbf{29.4}) \\
    soc-slashdot & 29(29.7) & 30(30) & 30(30) & \textbf{28}(\textbf{28.9}) \\
    soc-slashdot-trust-all & \textbf{29}(\textbf{29}) & 31(31) & 30(30) & \textbf{29}(\textbf{29}) \\
    soc-twitter-follows & \textbf{6}*(\textbf{6}) & 8(8) & N/A & \textbf{6}*(\textbf{6}) \\
    soc-twitter-follows-mun & \textbf{6}*(\textbf{6}) & 7(7) & N/A & \textbf{6}*(\textbf{6}) \\
    soc-twitter-higgs & \textbf{71}*(\textbf{71}) & 72(72) & N/A & \textbf{71}*(\textbf{71}) \\
    soc-wiki-Talk-dir & \textbf{49}(\textbf{49}) & 56(56) & N/A & \textbf{49}(\textbf{49}) \\
    soc-wiki-Vote & \textbf{7}*(\textbf{7}) & \textbf{7}*(\textbf{7}) & \textbf{7}(\textbf{7}) & \textbf{7}*(\textbf{7}) \\
    soc-youtube & 25(25) & 24(24) & N/A & \textbf{23}(\textbf{23}) \\
    soc-youtube-snap & 25(25) & 27(27) & N/A & \textbf{23}(\textbf{23.2}) \\
    tech-as-caida2007 & \textbf{16}*(\textbf{16}) & \textbf{16}(\textbf{16}) & \textbf{16}(\textbf{16}) & \textbf{16}*(\textbf{16}) \\
    tech-as-skitter & 68(68) & \textbf{67}(\textbf{67}) & N/A & \textbf{67}*(\textbf{67}) \\
    tech-internet-as & \textbf{16}*(\textbf{16}) & \textbf{16}(\textbf{16}) & \textbf{16}(\textbf{16}) & \textbf{16}*(\textbf{16}) \\
    tech-p2p-gnutella & \textbf{5}(\textbf{5}) & \textbf{5}(\textbf{5}) & \textbf{5}(\textbf{5}) & \textbf{5}(\textbf{5}) \\
    tech-RL-caida & 18(18) & \textbf{17}*(17.3) & \textbf{17}(\textbf{17}) & \textbf{17}*(\textbf{17}) \\
    tech-routers-rf & \textbf{16}*(\textbf{16}) & \textbf{16}(\textbf{16}) & \textbf{16}(\textbf{16}) & \textbf{16}*(\textbf{16}) \\
    tech-WHOIS & \textbf{58}*(\textbf{58}) & \textbf{58}*(\textbf{58}) & 59(59) & \textbf{58}*(\textbf{58}) \\
    web-arabic-2005 & \textbf{102}*(\textbf{102}) & \textbf{102}(\textbf{102}) & \textbf{102}(\textbf{102}) & \textbf{102}*(\textbf{102}) \\
    web-edu & \textbf{30}*(\textbf{30}) & \textbf{30}(\textbf{30}) & \textbf{30}(\textbf{30}) & \textbf{30}*(\textbf{30}) \\
    web-google & \textbf{18}*(\textbf{18}) & \textbf{18}(\textbf{18}) & \textbf{18}(\textbf{18}) & \textbf{18}*(\textbf{18}) \\
    web-indochina-2004 & \textbf{50}*(\textbf{50}) & \textbf{50}(\textbf{50}) & \textbf{50}(\textbf{50}) & \textbf{50}*(\textbf{50}) \\
    web-it-2004 & \textbf{432}*(\textbf{432}) & \textbf{432}(\textbf{432}) & N/A & \textbf{432}*(\textbf{432}) \\
    web-polblogs & \textbf{10}(\textbf{10}) & \textbf{10}(\textbf{10}) & \textbf{10}(\textbf{10}) & \textbf{10}(\textbf{10}) \\
    web-sk-2005 & \textbf{82}*(\textbf{82}) & \textbf{82}(\textbf{82}) & \textbf{82}(\textbf{82}) & \textbf{82}*(\textbf{82}) \\
    web-spam & \textbf{20}*(\textbf{20}) & \textbf{20}*(\textbf{20}) & \textbf{20}(\textbf{20}) & \textbf{20}*(\textbf{20}) \\
    web-uk-2005 & \textbf{500}*(\textbf{500}) & \textbf{500}(\textbf{500}) & \textbf{500}(\textbf{500}) & \textbf{500}*(\textbf{500}) \\
    web-webbase-2001 & \textbf{33}*(\textbf{33}) & \textbf{33}(\textbf{33}) & \textbf{33}(\textbf{33}) & \textbf{33}*(\textbf{33}) \\
    web-wikipedia2009 & \textbf{31}*(\textbf{31}) & \textbf{31}(\textbf{31}) & N/A & \textbf{31}*(\textbf{31}) \\
    \bottomrule
    \end{tabular}%
    }
  \label{tab-NR2}%
\end{table}%

\subsection{Summary of Results}
{\color{black}

A summary of the performance of \textsc{FastColor}, \textsc{LS+I-DSatur}, \textsc{GC-LSIM}, and \textsc{HyColor} across the DIMACS, DIMACS10, SNAP, and Network Data Repository benchmarks (NDR) is provided in Table~\ref{tab:summary}. 
The evaluation metrics employed are as follows: an asterisk (`*') indicates the number of instances where the algorithm achieves optimal coloring; a less-than symbol (`$<$') signifies instances in which the algorithm produces a coloring with fewer colors than its peers; and a less-than-or-equal-to symbol (`$<=$') denotes instances where the algorithm identifies the best coloring solution. It is noteworthy that \textsc{GC-LSIM} does not provide proof of optimality for its solutions, which is why there is no corresponding `*' column for it.
For more comprehensive details, please refer to Sections~\ref{exp:dimacs}--\ref{exp:nr}. Additionally, Figure~\ref{fig:summary} presents a more detailed visualization of the results, showcasing the superiority of \textsc{HyColor}. In this figure, the solid area represents the proportion of the best solutions identified by each algorithm across the respective benchmarks, while the dashed area indicates the proportion of strictly superior solutions found by each algorithm in those benchmarks.

\textsc{HyColor} consistently exhibited strong performance across all four benchmarks, with notable excellence in the DIMACS10, SNAP, and Network Data Repository datasets, which include real-world sparse graphs. It achieved both optimal and best coloring solutions, surpassing other algorithms in terms of strictly superior outcomes. \textsc{FastColor} also delivered commendable results, particularly in the Network Data Repository; however, it did not reach the same level as \textsc{HyColor} regarding strictly superior instances.
The algorithms \textsc{LS+I-DSatur}+ and \textsc{GC-LSIM} showcased competitive performance in certain benchmarks, but they lacked consistency in attaining optimal and strictly superior solutions throughout all tests. In summary, \textsc{HyColor} emerges as the most effective algorithm, successfully identifying both optimal and best solutions more frequently and consistently across a variety of graph instances.

}

\begin{table}[!t]
  \centering
  \caption{Summary Results}
  \resizebox{\linewidth}{!}{
    \begin{tabular}{lrrrrrrrrrrrr}
    \toprule
    \multicolumn{1}{l}{\multirow{2}[2]{*}{Benchmark}} & \multicolumn{3}{c}{\textsc{FastColor}} & \multicolumn{3}{c}{\textsc{LS+I-DSatur}} & \multicolumn{2}{c}{\textsc{GC-LSIM}} & \multicolumn{3}{c}{\textsc{HyColor}} \\
    \cmidrule(lr{0.1em}){2-4} \cmidrule(lr{0.1em}){5-7} \cmidrule(lr{0.1em}){8-9} \cmidrule(lr{0.1em}){10-12}
          & $<$ & $<=$  & $*$ & $<$ & $<=$  & $*$ & $<$ & $<=$ & $<$ & $<=$  & $*$ \\
    \midrule
    DIMACS & 1     & 9     & 1     & 2     & 6     & 1     & \textbf{4}     & \textbf{11}     & 2     & \textbf{11}    & \textbf{1}  \\
    DIMACS10 & 3     & 29    & 20    & 0     & 24    & 6     & 3     & 11       & \textbf{4}     & \textbf{30}    & \textbf{23}  \\
    SNAP  & 0     & 17    & 10    & 0     & 15    & 1     & 0     & 8      & \textbf{5}     & \textbf{22}    & \textbf{11}  \\
    NDR& 0     & 99    & 83    & 1     & 91    & 15    & 1     & 77      & \textbf{23}    & \textbf{131}   & \textbf{93}  \\
    \midrule
    Total & 4     & 154   & 114   & 3     & 136   & 23    & 8     & 107    & \textbf{34}    & \textbf{194}   & \textbf{128}  \\
    \bottomrule
    \end{tabular}%
    }
  \label{tab:summary}%
\end{table}%

\begin{figure}[htbp]
\centering
	\includegraphics[width=0.7\linewidth]{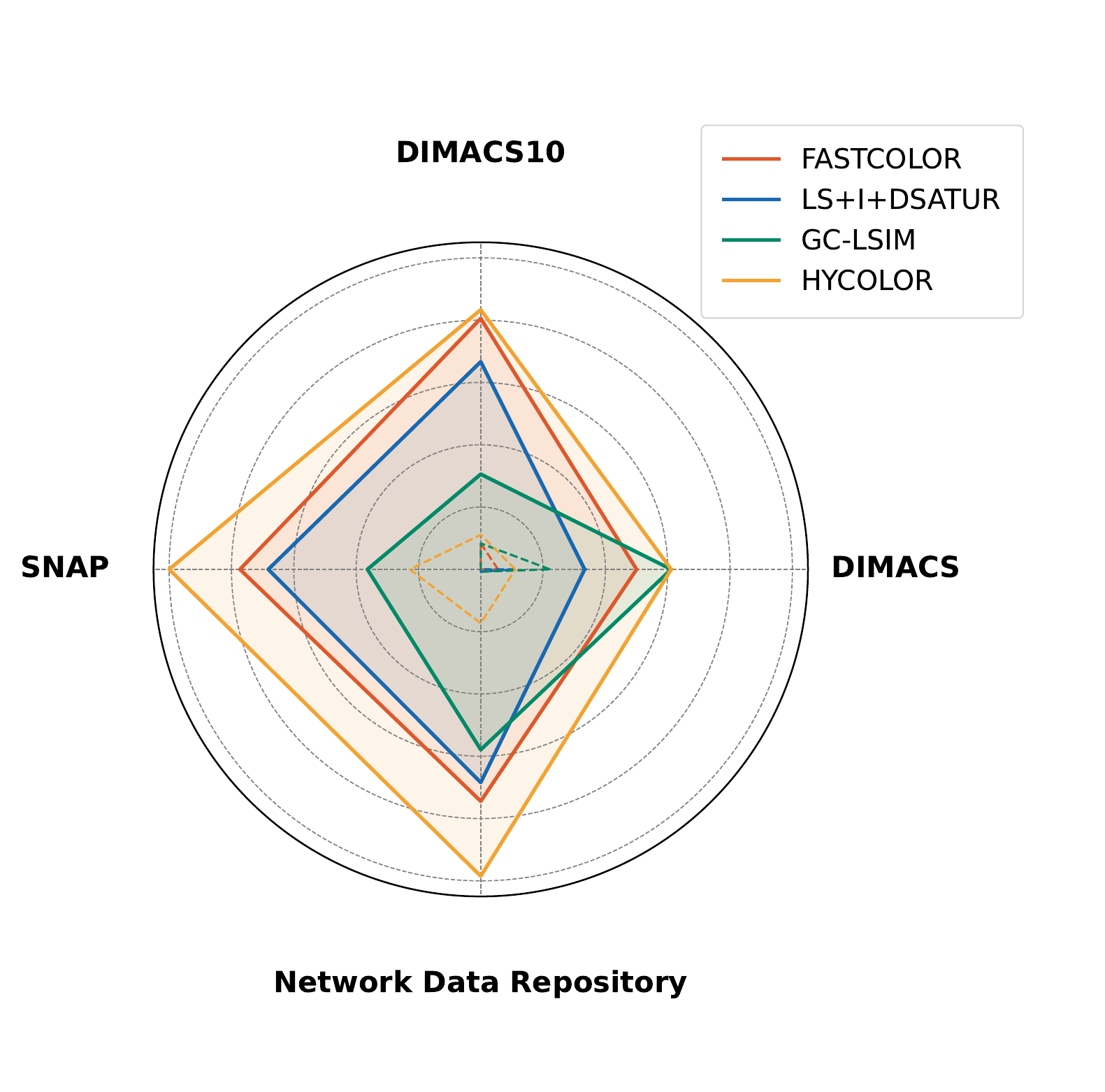}
	\caption{Visualization of Summary Results}
	\label{fig:summary}
\end{figure}

\subsection{Statistical Test}
{\color{black}

To evaluate the performance differences between our proposed algorithm, \textsc{HyColor}, and the baseline algorithms, we performed the Friedman \cite{friedman1937use} and Nemenyi \cite{demvsar2006statistical} tests.

The Friedman test is a non-parametric statistical method designed to detect differences in performance ranks among multiple algorithms across various datasets. We applied the Friedman test to ascertain whether \textsc{HyColor}’s performance significantly differs from the baseline algorithms. A low $p$-value (typically below 0.05) suggests significant algorithm differences. When the Friedman test identifies significant differences, a post-hoc test, such as the Nemenyi test, can be employed to determine which specific pairs of algorithms exhibit differences. The Nemenyi test utilizes the concept of a critical difference (CD) to assess whether the performance ranks of the two algorithms differ significantly. If the difference in their average ranks is less than the CD value, it indicates no significant difference between them; otherwise, a significant difference is present.
The CD is calculated as follows:

\begin{equation} \label{equ1}
CD = q_\alpha \cdot \sqrt{\frac{k (k + 1)}{6N}}
\end{equation}

In Equation \ref{equ1}, \( q_\alpha \) indicates the critical value derived from the Studentized Range Distribution Table. This value is determined by the significance level \( \alpha \) and the number of algorithms, represented by \( k \). Furthermore, \( N \) denotes the number of datasets or problem instances considered in the analysis. In our experiment, \( N \) is set at 209, and \( k \) equals 4. The corresponding critical value is \( q_\alpha = 2.569 \), which allows us to compute \( CD \approx 0.3244 \).

Based on the data presented in Tables~\ref{tab-DIMACS}–\ref{tab-NR2}, we ranked the algorithms by their average performance (`Avg') across ten runs for each instance and conducted a Friedman test. The results yielded a $p$-value of $6.4319 \times 10^{-28}$, indicating significant differences among the algorithms at a 0.05 significance level. As a result, we proceeded with the Nemenyi test for a more detailed comparison of the algorithms, with the findings illustrated in Fig.~\ref{fig:CD}. This figure shows that the differences between the algorithms exceed the CD value, confirming significant distinctions. Importantly, \textsc{HyColor} demonstrated superior performance compared to the baseline algorithms on average across 209 instances.
}

\begin{figure}[htbp]
\centering
	\includegraphics[width=\linewidth]{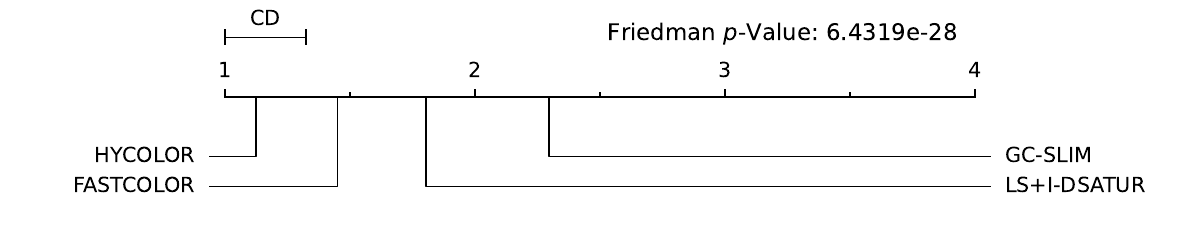}
	\caption{Statistic results with the Nemenyi test among four algorithms across 209 instances}
	\label{fig:CD}
\end{figure}

{\color{black}
\subsection{Effects of Different Parameter Settings} \label{sec:p}

As detailed in Section~\ref{sec-mddcolor}, the \textsc{MddColor} component of \textsc{HyColor} features a tunable parameter, \(\alpha\). To assess the influence of different parameter settings, we conducted an empirical evaluation of \textsc{HyColor} using \(\alpha\) values ranging from 0.1 to 1.0 in increments of 0.1. Our aim is for \textsc{HyColor} to minimize the number of colors used while demonstrating enhanced performance across instances from four benchmarks.
For each parameter setting, we ran \textsc{HyColor} ten times with a time limit of 60 seconds, recording the  `Min' and `Avg' ranks achieved on each benchmark. The average ranking was then used to evaluate overall performance. Figure~\ref{fig:ptest} illustrates the average performance of \textsc{HyColor} across the various parameter settings and benchmarks. Notably, it appears that \(\alpha = 0.2\) yielded the best performance, prompting us to configure \(\alpha\) for \textsc{HyColor} at this optimal value.
}

\begin{figure}[htbp]
\centering
	\includegraphics[width=0.9\linewidth]{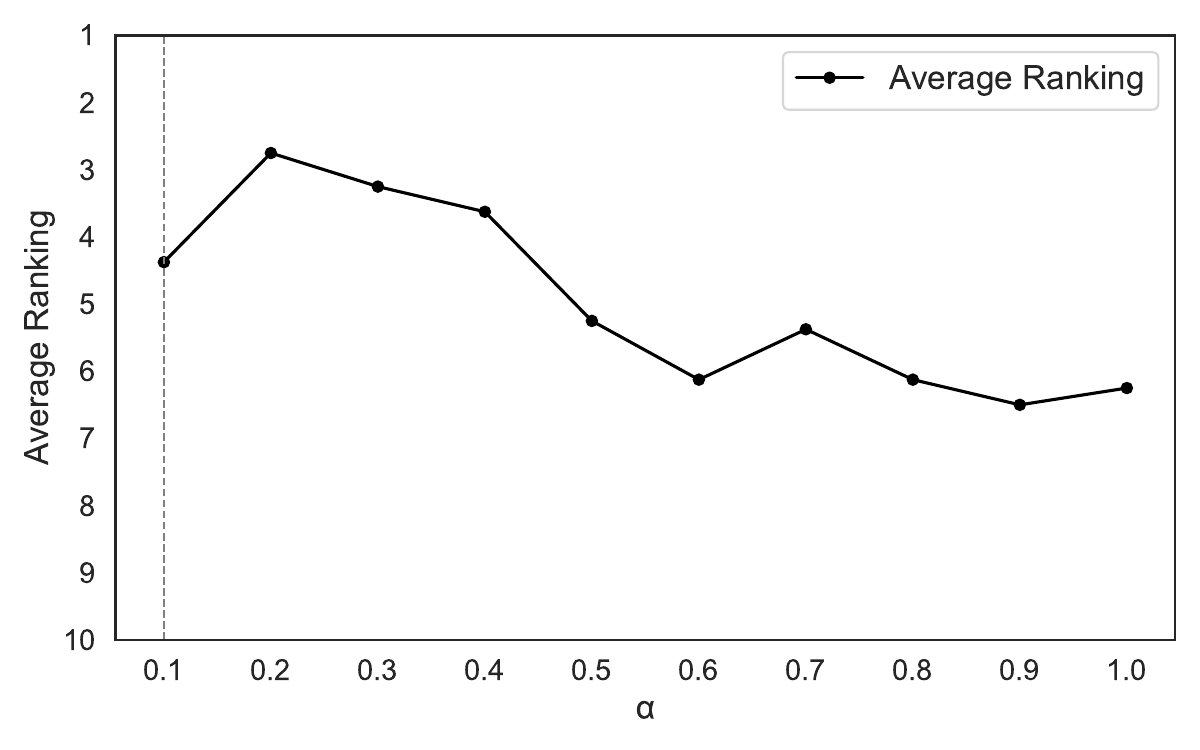}
	\caption{Effects of Different Parameter Settings}
	\label{fig:ptest}
\end{figure}

{\color{black}
\subsection{Effects of Main Components}

To assess the effectiveness of the proposed strategies, we performed tests on three variants of \textsc{HyColor}. These variants include: \textsc{HyColor} without the \textsc{ReduRule} component, referred to as `\textsc{without\_redu}'; \textsc{HyColor} without the MDD sorting component, known as `\textsc{without\_mdd}'; and \textsc{HyColor} without the \textsc{Exactlb()} function, designated as `\textsc{without\_exactlb}'. In the \textsc{HyColor} algorithm, the \textsc{ReduRule} and MDD-based sorting components play crucial roles in enhancing the quality of the coloring solution. In contrast, the \textsc{ExactLb()} function is designed to refine the clique identified by \textsc{FindClq(),} which serves as the lower bound for the chromatic number.
We compared the average `Min' values obtained from \textsc{without\_redu} and \textsc{without\_mdd} with the results achieved by the \textsc{FastColor}, \textsc{LS+I-DSatur}, and \textsc{HyColor} algorithms across a variety of benchmarks. Furthermore, we evaluated the lower bounds produced by \textsc{HyColor} in comparison to those from \textsc{without\_exactlb}. It is noteworthy that we did not include results from \textsc{GC-LSIM} in this analysis due to its inability to solve certain instances within the 128GB memory limit. As a result, the average `Min' values obtained by \textsc{GC-LSIM} on these benchmarks cannot be directly compared to those of the other algorithms.

In the same configuration outlined in Section~\ref{sec:setting}, the performance of each algorithm across the four benchmarks is presented in Figure~\ref{fig:comparable}. The horizontal axis illustrates the average `Min' values achieved by each algorithm on each benchmark, while the vertical axis enumerates the algorithm names. A longer bar signifies a higher average `Min' value, indicating poorer performance by the algorithm. Conversely, a shorter bar reflects a lower average `Min' value, suggesting improved performance. 
{\color{black}
As depicted in the figure, \textsc{HyColor} consistently outperforms the ablation algorithms across all benchmarks, except for a minor disadvantage on DIMACS when compared to \textsc{without\_mdd}. This occurs because MDD is generally more effective for analyzing sparse and complex networks, such as social networks. However, its applicability is limited, and it tends to exhibit higher time complexity when dealing with dense graphs.
}

Furthermore, we selected four representative graphs (a dense graph, two real-world graphs of different scales, and a large-scale graph used in the 10th DIMACS Implementation Challenge) to demonstrate the performance of \textsc{without\_exactlb} and \textsc{HyColor} in finding the chromatic number lower bound, as shown in Figure~\ref{fig:without_exactlb}. It can be observed that \textsc{HyColor} improves the lower bound more quickly and is sometimes able to find a better lower bound compared to \textsc{without\_exactlb} (e.g., in the instance of `soc-Livejournal1').
}

\begin{figure*}[htbp]
\centering
	\includegraphics[width=\linewidth]{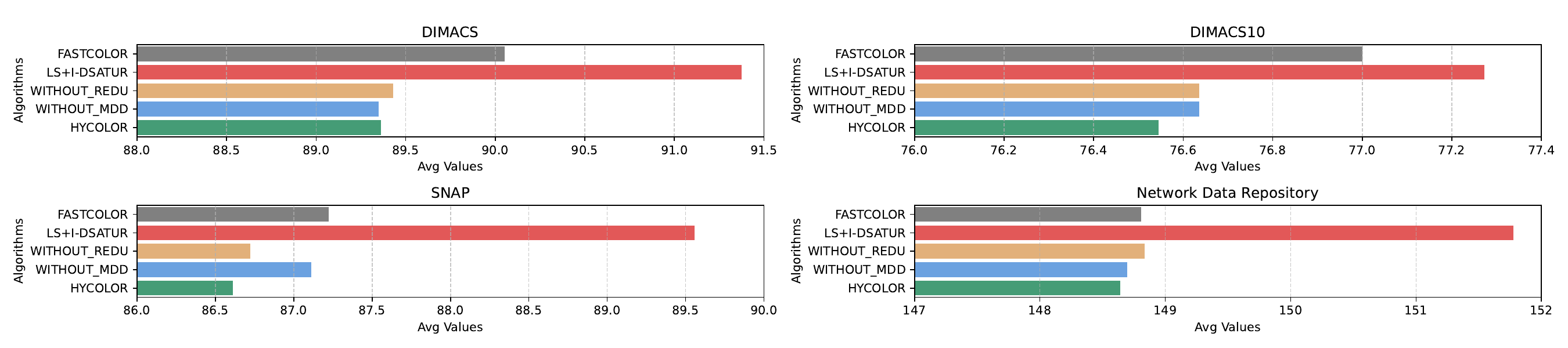}
	\caption{Effects of ReduRule and MDD Components}
	\label{fig:comparable}
\end{figure*}

\begin{figure*}[htbp]
\centering
	\includegraphics[width=\linewidth]{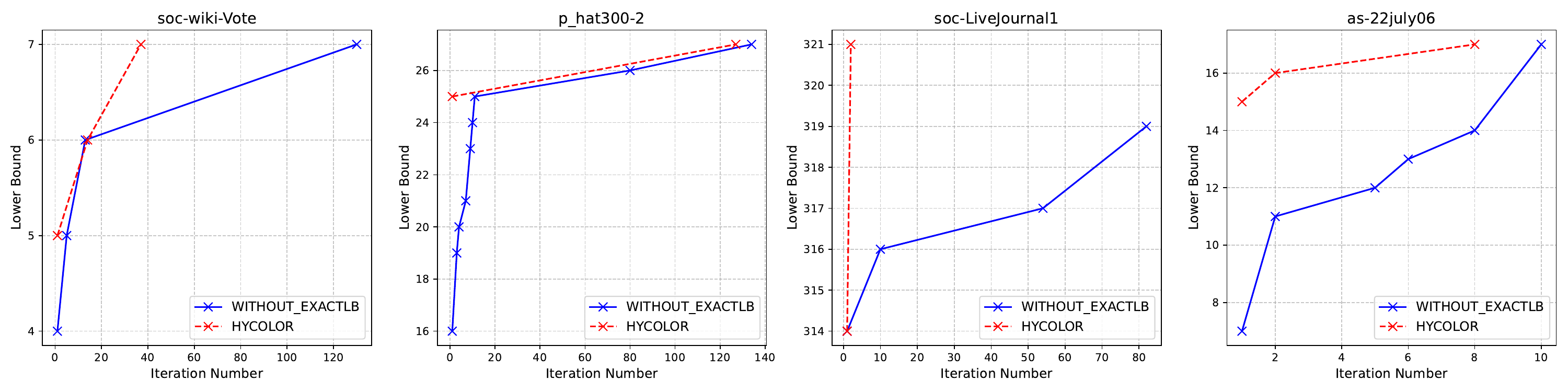}
	\caption{Effects of ExactLB Component}
	\label{fig:without_exactlb}
\end{figure*}

\section{Conclusions}\label{sec-conclusion}
{\color{black}
The performance of heuristic algorithms for coloring large-scale sparse graphs has seen little advancement over the past five years. In this paper, we introduce an algorithm called \textsc{HyColor} to address GCP. When compared with leading heuristic algorithms, such as \textsc{FastColor} and \textsc{LS + I-DSatur}, which are particularly effective on large-scale sparse graphs, \textsc{HyColor} demonstrates efficient coloring capabilities while providing solutions of equal or superior quality. Additionally, \textsc{HyColor} performs competitively on small dense graphs.

In contrast to \textsc{GC-SLIM}, an algorithm designed for dense graphs, \textsc{HyColor} offers significant improvements in large-scale sparse graph coloring while maintaining comparable performance on small dense graphs. Notably, \textsc{HyColor} surpasses \textsc{GC-SLIM} on certain specialized graphs, such as `p\_hat1500-2', `p\_hat300-2', `p\_hat700-1', and `p\_hat700-2', highlighting its versatility and adaptability across diverse graph types and sizes. While the experimental results indicate that \textsc{HyColor} effectively balances performance on both large-scale sparse and small dense graphs, it still shows limitations on some small dense graphs, including `C2000*', `C4000.*', and `DSJC*'. Improving \textsc{HyColor}’s performance on these dense graph instances remains a key objective for future research, as this would enhance its overall robustness and applicability in graph coloring tasks.

This study has focused exclusively on vertex coloring. However, there are other significant coloring problems, such as partition coloring and total coloring, which present additional challenges, especially in the context of large-scale graphs. Developing optimization algorithms to tackle these problems represents a considerable challenge, and the concepts and algorithmic framework introduced in this work may be extended to these broader areas. Exploring such adaptations for other coloring problems is a promising direction for future research.
}

\bibliographystyle{IEEEtran}

\bibliography{cas-refs}

\end{document}